\newtheorem{mythm}{Theorem}
\newtheorem{mylemma}{Lemma}
\begin{document}
\begin{doublespace}
\title{A Pivot-Based Improvement to Sandwich-Based Confidence Intervals
}
\author{James W. Harmon$^{1}$ and Peter D. Hoff$^{1,2}$ \\
Departments of Statistics$^{1}$ and Biostatistics$^{2}$ \\
University of Washington}
\date{\today}
\maketitle

\begin{abstract}
The current standard for confidence interval construction in the context of a possibly misspecified model is to use an interval based on the sandwich estimate of variance.  These intervals provide asymptotically correct coverage, but small-sample coverage is known to be poor.  By eliminating a plug-in assumption, we derive a pivot-based method for confidence interval construction under possibly misspecified models.  When compared against confidence intervals generated by the sandwich estimate of variance, this method provides more accurate coverage of the pseudo-true parameter at small sample sizes.  This is shown in the results of several simulation studies.  Asymptotic results show that our pivot-based intervals have large sample efficiency equal to that of intervals based on the sandwich estimate of variance.  
\end{abstract}

\section{Introduction}

The method of maximum likelihood is a powerful and efficient technique for statistical inference, but much of the efficiency is lost if the data-generating model is incorrectly specified.  If a statistician selects the wrong family of distributions to model the data, then the standard variance estimates using the Fisher information matrix will likely be biased, even asymptotically.  Consequently, any inference performed with the incorrect model will be adversely affected.  An example of this is when true coverage rates of confidence intervals do not match nominal coverage rates even as the sample size increases.

This begs the question of why any analyst would want to use maximum likelihood or any parametric methods at all.  One answer is that the parameter in a misspecified model might still be the appropriate parameter to estimate.  An example of this might be the measure of linear association between a covariate and the response variable in a regression model.  Even if the full model has been misspecified, this association might still be of interest to a researcher.  Given certain regularity conditions, the maximum likelihood estimator converges to a unique pseudo-true parameter.  This pseudo-true parameter has a nice interpretation:  it minimizes the Kullback-Leibler distance between the true data-generating model and assumed family of models \citep{akaike1973information}.  In other words, the pseudo-true parameter yields the distribution from the misspecified model that is ``closest'' to the true data-generating distribution.  The pseudo-true parameter might well be of interest, even if the model is misspecified.  For example, an analyst might be interested in knowing the linear association between a covariate and the response in a linear regression.

The second answer is that parametric methods increase inferential efficiency, even if the model is incorrect.  As one moves through the spectrum of parametric, semi-parametric, and non-parametric methods, assumptions about the data-generating model decrease.  The trade-off for fewer assumptions is that the efficiency of one's inference decreases as well.  In terms of confidence intervals, decreased efficiency means larger confidence intervals for a given confidence level and sample size.  Using fully empirical methods clearly requires the fewest assumptions of the true data-generating model, but the cost in efficiency might be too high.  Once the decision to use parametric or semi-parametric methods has been made, the question then boils down to one of how to perform inference about the pseudo-true parameter.  

The standard method for creating confidence intervals using misspecified models was introduced by \citet{huber1967behavior} and expanded by \citet{white1982maximum}.  It is based on what is commonly called the ``sandwich'' estimate of variance, so named because the matrix version of this estimator has the form $A^{-1} B A^{-1}$ where the outer matrices are the ``bread'' and the inner matrix is the ``peanut butter''.  The main benefit of the sandwich estimate of variance is that it provides asymptotically correct variance estimates for maximum likelihood parameter estimates.  This yields confidence intervals with asymptotically correct coverage rates for pseudo-true parameters.  Additionally, it is applicable across a wide range of statistical techniques and is not limited to maximum likelihood methods \citep{chen2005pseudo}, \citep{huber2011robust}.

While the sandwich estimate of variance provides asymptotically correct inference, the small sample confidence interval coverage is typically quite poor.  This has been attributed to the increased variability in the sandwich estimator over other, less robust variance estimators \citep{kauermann2001note}.  Specifically, \citet{efron1986discussion} in his discussion of \citet{wu1986jackknife} noticed greater variability in the sandwich estimator than in the standard variance estimator in his simulation results.  \citet{breslow1990tests} also finds greater variability in the sandwich estimator in his simulations of overdispersed Poisson regression.  \citet{firth1992discussion} and \citet{mccullagh1992discussion} in their review of \citet{liang1992multivariate} both question the small sample efficiency of the sandwich.  And \citet{diggle2009analysis} note that best results are obtained when there are ``many experimental units''.  The result of greater variability in typical small sample scenarios is that confidence intervals may only yield much less than nominal coverage of the pseudo-true parameter.  Greater variability and consequently poor confidence interval coverage is the result both of multiple approximations being used in the derivation of confidence intervals based on the sandwich estimate of variance and of multiple quantities being estimated.  

Several authors have attempted to adjust the sandwich estimate of variance to improve confidence interval coverage in small sample sizes.  The simplest, though not the first, adjustment to the sandwich is to multiply the standard estimator by $\tfrac{n}{n-p}$ where $p$ is the number of parameters in the model and $n$ is the size of the dataset \citep{hinkley1977jackknifing}.  Following the convention laid out in \citet{hardin2003sandwich}, we call this adjustment ``hc2''.  Another earlier attempt to correct for the bias of the sandwich's variance estimate was \citet{horn1975estimating}.  This correction divided the $i^{th}$ diagonal element in the inner, ``peanut butter'' matrix by $1-h_{ii}$, where $h_{ii}$ is the $i^{th}$ diagonal element from the hat matrix.  The hat matrix refers to the projection matrix generated by the design matrix in linear regression.  We call this adjustment ``hc1''.  A jackknife estimator was used by \citet{efron1982jackknife}, which is equivalent to dividing the $i^{th}$ diagonal element in the inner, ``peanut butter'' matrix by $(1-h_{ii})^2$.  We call this adjustment ``hc3''.  In \citet{kauermann2000sandwich} an adjusted normal quantile value is computed, instead of adjusting the sandwich directly.  We call this adjustment ``hc4''.  \citet{fay2001small} compute the degrees of freedom needed for approximately correct $t-$ and $F-$distributions, instead of using asymptotically correct normal and chi-squared distributions.  We call this adjustment ``hc5''.  

As remarked upon in \citet{hardin2003sandwich}, all of these adjustments to the sandwich variance estimator are ``ad hoc solutions''. We think a better critique would be that they are ``post hoc solutions'', since they all attempt to adjust the sandwich estimate of variance after the fact.  All of them attempt to inflate confidence interval sizes, but none of them do so in a way that directly addresses the variability inherent in the sandwich estimate of variance.  Two of the methods (\citep{kauermann2000sandwich} and \citep{fay2001small}) do not focus on adjusting the sandwich estimate of variance.  Instead, they modify the standard normal quantiles used in confidence interval construction, or they replace asymptotic normal and chi-squared distributions with Student-$t$ and $F$ distributions.  To date, there is no universally-agreed-upon adjustment or alternative to the sandwich variance estimator for small sample sizes that corrects for the its structural problems.  

Instead of trying to adjust the sandwich estimate of variance after the fact, we analyze the argument that produces the sandwich.  We find three approximating assumptions in this argument.  By eliminating one of these approximating assumptions and remembering that the ultimate goal is inference on a parameter, we derive a pivot-based method of inference that produces more accurate confidence intervals in small sample sizes than the sandwich estimate of variance and its ``post hoc'' adjustments.  

The rest of the paper is organized as follows.  In Section \ref{method}, we derive our pivot-based method by minimizing the number of assumptions used in the standard sandwich-based method.  By avoiding plug-in assumptions commonly used with sandwich-based inference, we preserve any mean-variance relationships in the model and improve confidence interval coverage in smaller sample sizes.  In Section \ref{oneparam}, we give a proof in the one-parameter case that no asymptotic efficiency is lost by using the pivot versus the sandwich.  In that same section we follow up with simulation examples that show we can achieve improved coverage in small sample sizes.  In Section \ref{multparam}, we extend our proof of asymptotic equivalence to the multiple parameter case, and show a corresponding simulation exercise using a simple linear regression model.  Section \ref{realworld} takes data from a real-world example, runs a multiple regression on the data, and computes confidence region coverage for all parameters combined.  

\section{Pivot Method}\label{method}
The argument deriving the sandwich estimate of variance makes three approximating assumptions.  In this Section we review those assumptions.  Our approach to inference eliminates the plug-in assumption, resulting in better confidence interval coverage at small sample sizes.

We now describe the derivation of the sandwich estimate of variance for a one-parameter model under misspecification.  Let $y_1, \ldots, y_n$ represent data values.  Let the family of distributions assumed by the analyst be described by the set of densities $\{ f(y;\theta): \theta \in \Theta \}$ with indexing parameter $\theta$, and $\Theta$ being a compact subset of $R^p$.  We will write true distribution of the data as density $g(y)$, where $g(y)$ is not necessarily in $\{f(y;\theta) : \theta \in \Theta \}$.  Let $\prod_{i=1}^n f(y_i;\theta)$ be the pseudo-likelihood of independent and identically distributed data.  Then the pseudo-log-likelihood is given by $\sum_{i=1}^n \log(f(y_i;\theta))$.  We denote the derivative of the pseudo-log-likelihood with respect to $\theta$ by $\sum_{i=1}^n l_{\theta}(y_i; \theta)$ where the subscript denotes the derivative. The second derivative of the pseudo-log-likelihood with respect to $\theta$ is similarly denoted by $\sum_{i=1}^n l_{\theta \theta}(y_i; \theta)$.  We let $\hat{\theta}$ denote the maximum misspecified likelihood estimator.  This is simply the estimator we compute when using maximum likelihood methods on a possibly misspecified likelihood.  We use $\theta^{\ast}$ to represent the Kullback-Leibler minimizing parameter, which we refer to as the pseudo-true parameter. Then by a mean value lemma for random variables proven in \citet{jennrich1969asymptotic} there exists $\bar{\theta}$, a weighted average of $\hat{\theta}$ and $\theta^{\ast}$, that satisfies the following equation
\begin{align} \label{mainjennrich}
- \frac{1}{\sqrt{n}} \sum_{i=1}^n l_{\theta}(y_i, \theta^{\ast}) = \frac{1}{\sqrt{n}} \left[ \sum_{i=1}^n l_{\theta \theta}(y_i, \bar{\theta}) \right] (\hat{\theta} - \theta^{\ast})
\end{align}
From here we solve for $\sqrt{n}(\hat{\theta} - \theta^{\ast})$ and compute the variance of its limiting distribution.  In the process, one must make three approximating assumptions:  
\begin{enumerate}
\item $\tfrac{1}{\sqrt{n}} \sum_{i=1}^n l_{\theta}(y_i, \theta^{\ast}) \overset{\cdot}{\sim} N(0,B(\theta^{\ast}))$
\item $\tfrac{1}{n} \sum_{i=1}^n [l_{\theta}(y_i, \theta^{\ast})]^2 = B_n(\theta^{\ast}) \approx E\{[l_{\theta}(y,\theta^{\ast})]^2\} = B(\theta^{\ast})$
\item $\hat{\theta} \approx \theta^{\ast}$.
\end{enumerate}
Assumption 1 is a distributional assumption.  Since we assume a possibly misspecified model, we cannot state the exact distribution of the score function.  Instead, we rely on the Central Limit Theorems for an approximate distribution.  Assumption 2 is an empirical assumption.  Again because we assume a possibly misspecified model we cannot compute the expectation in that line.  We use the empirical approximation instead.  Assumptions 3 is the plug-in assumption.  We do not know the true value of the pseudo-true parameter, so we approximate the value with its estimator.  

If we let $A_n(\bar{\theta}) = \tfrac{1}{n} \sum l_{\theta \theta}(y_i, \bar{\theta})$ then the argument proceeds as follows:
\begin{align}
\frac{1}{\sqrt{n}} \sum l_{\theta}(y_i, \theta^{\ast}) &\overset{\cdot}{\sim} N(0,B(\theta^{\ast})) \label{inferenceline}
\end{align}
is directly from assumption (i).
\begin{align}
A_n(\bar{\theta}) \sqrt{n}(\hat{\theta} - \theta^{\ast}) &\overset{\cdot}{\sim} N(0,B(\theta^{\ast})) \label{line2}
\end{align}
is derived by replacing the score function in assumption (i) with the right-hand side of (\ref{mainjennrich}).
\begin{align}
B^{-1/2}(\theta^{\ast}) A_n(\bar{\theta}) \sqrt{n}(\hat{\theta} - \theta^{\ast}) &\overset{\cdot}{\sim} N(0,1) \label{line3}
\end{align}
is computed by multiplying both sides by $B^{-1/2}(\theta^{\ast})$.
\begin{align}
B_n^{-1/2}(\theta^{\ast}) A_n(\bar{\theta}) \sqrt{n}(\hat{\theta} - \theta^{\ast}) &\overset{\cdot}{\sim} N(0,1) \label{line4} 
\end{align}
results from applying assumption (ii).
\begin{align}
B_n^{-1/2}(\hat{\theta}) A_n(\hat{\theta}) \sqrt{n}(\hat{\theta} - \theta^{\ast}) &\overset{\cdot}{\sim} N(0,1) \label{line5}
\end{align}
uses assumptions (iii) and (iv), replacing $\theta^{\ast}$ with $\hat{\theta}$ in $A_n$.  Recall that $\bar{\theta}$ is a linear combination of $\hat{\theta}$ and $\theta^{\ast}$, so replacing $\bar{\theta}$ with $\hat{\theta}$ in $B_n^{-1/2}$ is equivalent to replacing $\theta^{\ast}$ with $\hat{\theta}$.  

Our goal is to eliminate one or more of these approximating assumptions in order to improve accuracy in confidence intervals about $\theta^{\ast}$.  In our derivation we also start with line (\ref{inferenceline}), but we remind ourselves that finding the asymptotic distribution of $\sqrt{n}(\hat{\theta} - \theta^{\ast})$ is only a tool for creating confidence intervals and is not our end goal.  Our goal is to improve confidence interval coverage about $\theta^{\ast}$.  

Since we are operating under the assumption of model misspecification, we have no knowledge of the exact distribution of $\tfrac{1}{\sqrt{n}} \sum_{i=1}^n l_{\theta}(y_i, \theta^{\ast})$.  The best we can do is to keep the assumption of approximate normality by appealing to the Central Limit Theorem and using assumption (i) above.  This may be a crude approximation, but it is the only distributional result we have.  Model misspecification also means we cannot compute the asymptotic variance $B(\theta)$, so we estimate it with its empirical approximation, $B_n(\theta)$, which is assumption (ii) from above.  With these assumptions alone, we can still perform inference using an approximate pivotal quantity.  The pivot and its asymptotically correct distribution is derived now.
\begin{align}
\frac{1}{\sqrt{n}} \sum l_{\theta}(y_i, \theta^{\ast}) &\overset{\cdot}{\sim} N(0,B(\theta^{\ast})) 
\end{align}
is the same as line (\ref{inferenceline}) and 
\begin{align}
B^{-1/2}(\theta^{\ast}) \frac{1}{\sqrt{n}} \sum l_{\theta}(y_i, \theta^{\ast}) &\overset{\cdot}{\sim} N(0,1) 
\end{align}
is the result of multiplying both sides by $B^{-1/2}(\theta^{\ast})$.
Applying Assumption 2 we get
\begin{align}
B_n^{-1/2}(\theta^{\ast}) \frac{1}{\sqrt{n}} \sum l_{\theta}(y_i, \theta^{\ast}) &\overset{\cdot}{\sim} N(0,1) 
\end{align}
And the explicit, approximate pivot is
\begin{align}
t(y_1, \ldots, y_n; \theta^{\ast}) &= \left\{ \frac{1}{n} \sum_{i=1}^n [l_{\theta}(y_i, \theta^{\ast})]^2 \right\}^{-1/2} \frac{1}{\sqrt{n}} \sum_{i=1}^n l_{\theta}(y_i, \theta^{\ast}) \overset{\cdot}{\sim} N(0,1). \label{mainpivot}
\end{align}

With (\ref{mainpivot}), one can perform a linear search on $\theta^{\ast}$ to find confidence interval boundaries given a confidence level of $100(1-\alpha)\%$.  Find the $\theta$-values that makes the left-hand side of line (\ref{mainpivot}) equal to the desired standard normal quantiles, and that is the confidence interval.  Using this method, we have eliminated the plug-in approximating assumption required to perform inference using the sandwich estimate of variance. 

\section{One-Parameter Asymptotics and Simulation Example}\label{oneparam}
In this section we will do two things, both in the univariate parameter case.  First we show that the pivot loses no efficiency with respect to the sandwich.  This is important because if the pivot were only better than the sandwich at small sample sizes, but lost efficiency with respect to the sandwich at larger sample sizes, we would have no way of determining which method to use in order to perform the best inference possible.  Second, we give a couple of simulation examples comparing the pivot method with the sandwich and its post hoc adjustments.  Our examples show that the pivot method performs favorably.

\subsection{One-Parameter Asymptotics} \label{oneparamasy}
We now show that the confidence interval procedure based on pivot (\ref{mainpivot}) is asymptotically as efficient as the confidence interval based on the plug-in sandwich estimator.  In the context of confidence interval evaluation, efficiency is judged based on relative confidence interval width given a particular confidence level.  For a one-sided confidence interval, the equivalent comparison is of the values of the finite boundary.  We show in this subsection that
\begin{align} \label{uniasyequ}
\sqrt{n} | \theta_{n,1} - \theta_{n,2} | &\rightarrow_{a.s.} 0
\end{align}
where $\theta_{n,1}$ and $\theta_{n,2}$ that are defined by the equalities
\begin{align}
\sqrt{n} B_n^{-1/2}(\theta_{n,1}) \frac{1}{n} \sum l_{\theta}(y_i, \theta_{n,2}) &= z_{1-\alpha} \label{pivotmethod} \\
\sqrt{n} B_n(\hat{\theta})^{-1/2} A_n(\hat{\theta}) (\hat{\theta} - \theta_{n,2}) &= z_{1-\alpha}, \label{sandwichmethod}
\end{align}
which yield the boundaries of the one-sided confidence interval using the pivot, (\ref{pivotmethod}), and the sandwich, (\ref{sandwichmethod}), respectively.  To find a two-sided interval, we would set each equation above to $z_{1-\alpha/2}$ and $z_{\alpha/2}$, resulting in four total equations, and then solve for the interval endpoints.

In order to facilitate computation, we will substitute Equation (\ref{mainjennrich}) into Equation (\ref{pivotmethod}) to get
\begin{align}
\sqrt{n} B_n^{-1/2}(\theta_{n,1}) \frac{1}{n} \sum l_{\theta}(y_i, \theta_{n,2}) &= \sqrt{n} B_n(\theta_{n,1})^{-1/2} A_n(\bar{\theta}) (\hat{\theta} - \theta_{n,1}) = z_{1-\alpha} . \label{practicalpivot}
\end{align}

The reason for this approach is intuitive.  If two estimators have asymptotically normal distributions, then computing the relative efficiency of the estimators is equivalent to comparing confidence interval widths.  If the widths are the same, then the estimators are equally efficient.  If the widths are different, then the shorter confidence interval is more efficient.  

Expressions that we will use for each of $\theta_{n,1}$ and $\theta_{n,2}$ are derived from Equations (\ref{sandwichmethod}) and (\ref{practicalpivot})
\begin{align}
\theta_{n,1} &= \hat{\theta}_n - \frac{A_n((1-b_n)\hat{\theta}_n + b_n\theta_{1,n})^{-1} B_n(\theta_{n,1})^{1/2} z_{1-\alpha}}{\sqrt{n}} \label{pivplug}\\
\theta_{n,2} &= \hat{\theta}_n  - \frac{A_n(\hat{\theta})^{-1} B_n(\hat{\theta})^{1/2} z_{1-\alpha}}{\sqrt{n}} \label{sandplug}
\end{align}
where $b_n$ is the value between $0$ and $1$ that satisfies the Lemma 3 equality in \citet{jennrich1969asymptotic}.  Notice that we did not solve for $\theta_{n,1}$ explicitly.  This is not a problem.  We are considering asymptotic behavior, and we prove in the appendices that $\theta_{1,n} \rightarrow_{a.s.} \theta^{\ast}$.  Now consider the scaled difference
\begin{align}
\sqrt{n} | \theta_{n,1} - \theta_{n,2} | &= \sqrt{n} \left|\hat{\theta}_n - \frac{A_n((1-b_n)\hat{\theta}_n + b_n\theta_{n,1})^{-1} B_n(\theta_{n,1})^{1/2} z_{1-\alpha}}{\sqrt{n}} \right. \nonumber \\
 &- \left. \left(\hat{\theta}_n  - \frac{A_n(\hat{\theta})^{-1} B_n(\hat{\theta})^{1/2} z_{1-\alpha}}{\sqrt{n}}\right) \right| \label{convplug}\\
\end{align}
where we replace the left-hand side with the equivalent expressions from (\ref{pivplug}) and (\ref{sandplug}).  With some algebra we get
\begin{align}
 &=\sqrt{n} \left|\frac{A_n(\hat{\theta})^{-1} B_n(\hat{\theta})^{1/2} z_{1-\alpha}}{\sqrt{n}}  - \frac{A_n((1-b_n)\hat{\theta}_n + b_n\theta_{n,1})^{-1} B_n(\theta_{n,1})^{1/2} z_{1-\alpha}}{\sqrt{n}}\right| \label{convalg1}
\end{align}
and
\begin{align}
 &= \left|A_n(\hat{\theta})^{-1} B_n(\hat{\theta})^{1/2} z_{1-\alpha}  - A_n((1-b_n)\hat{\theta}_n + b_n\theta_{1,n})^{-1} B_n(\theta_{n,1})^{1/2} z_{1-\alpha} \right| . \label{convalg2}\\
\end{align}
And finally we have
\begin{align}
\left|A_n(\hat{\theta})^{-1} B_n(\hat{\theta})^{1/2} z_{1-\alpha}  - A_n((1-b_n)\hat{\theta}_n + b_n\theta_{1,n})^{-1} B_n(\theta_{n,1})^{1/2} z_{1-\alpha} \right| \rightarrow_{a.s.} \\
z_{1-\alpha} \left| A(\theta^{\ast})^{-1} B(\theta^{\ast})^{1/2} - A(\theta^{\ast})^{-1} B(\theta^{\ast})^{1/2} \right| \label{convergence}\\
 = 0.
\end{align}
These last lines are the consequence of several convergence results, the first being $\theta_{n,1} \rightarrow_{a.s.} \theta^{\ast}$, which is shown in the appendices.  By the strong law of large numbers, we have $A_n(\hat{\theta}) \rightarrow_{a.s.} A(\theta^{\ast})$, $A_n(\bar{\theta}) \rightarrow_{a.s.} A(\theta^{\ast})$, $B_n(\hat{\theta}) \rightarrow_{a.s.} B(\theta^{\ast})$, $B_n(\theta_{n,1}) \rightarrow_{a.s.} B(\theta^{\ast})$, and by Mann-Wald's Theorem we have line (\ref{convergence}) above.  Since the $\sqrt{n}$-distance between confidence interval boundaries converges to zero, the confidence intervals generated by the pivot and the sandwich are asymptotically equivalent.  Hence, the two procedures for generating confidence intervals are asymptotically equally efficient.

\subsection{One-Parameter Small Sample Comparisons}
In this section we examine two, one-parameter examples comparing confidence interval performance between intervals created using the likelihood pivot and intervals created using the sandwich estimate of variance.  The first example assumes a Poisson likelihood for count data.  The second example assumes a regression through the origin model with one response variable and one covariate.  Our choice of examples is meant to showcase the diverse applications of the likelihood pivot.  In our two examples we show that our method works with count data, with continuous data, in the presence of heteroscedasticity, and in a regression setting.  Broad applicability is one of the strengths of the likelihood pivot method for generating confidence intervals.  

\subsection{Poisson Data} \label{poisdat}
In this example, we assume a working model
\begin{align}
y_i &\overset{i.i.d.}{\sim} Pois(\theta) \qquad i=1,\ldots,n
\end{align}
and we wish to perform inference on the parameter of interest $\theta$.  A typical misspecification would be assuming the above model when an overdispersed Poisson is the truth.  We can generate such a distribution using the Negative Binomial distribution, since the variance of a Negative Binomial random variable can differ from its mean.  The Negative Binomial is what we used as our true data-generating distribution in this example.

In this model we have the following pieces of Equations \eqref{mainjennrich} and \eqref{mainpivot}
\begin{align}
l_{\theta}(\theta^{\ast}) &= \frac{1}{\theta^{\ast}} \sum_{i=1}^n y_i - n \\
\hat{\theta} &= \frac{1}{n} \sum_{i=1}^n y_i \\
l_{\theta \theta}(\bar{\theta}) &= \frac{-1}{(\bar{\theta})^2} \sum_{i=1}^n y_i
\end{align}
We then plug those into the pivot expression to get
\begin{align}
\frac{1}{\sqrt{n}} B_n(\theta^{\ast})^{-1/2} l_{\theta}(y_1, \ldots, y_n; \theta^{\ast}) &= \frac{1}{\sqrt{n}} \frac{\frac{1}{\theta^{\ast}} \sum_{i=1}^n y_i - n}{\sqrt{\frac{1}{(\theta^{\ast})^2} \frac{1}{n}\sum_{i=1}^n (y_i - \theta^{\ast})^2}} \\
 &= \sqrt{n} \frac{\frac{1}{\theta^{\ast}} \frac{1}{n} \sum_{i=1}^n y_i - 1}{\frac{1}{|\theta^{\ast}|}\sqrt{\frac{1}{n}\sum_{i=1}^n (y_i - \theta^{\ast})^2}} \\
 &= \sqrt{n} \frac{\frac{1}{n} \sum_{i=1}^n y_i - \theta^{\ast}}{\sqrt{\frac{1}{n}\sum_{i=1}^n (y_i - \theta^{\ast})^2}} \label{poispivot}
\end{align}
and we perform a linear search on the parameter in this pivot to find our confidence interval.

The sandwich estimate of variance is calculated as
\begin{align}
A_n(\hat{\theta})^{-1} B_n(\hat{\theta}) A_n(\hat{\theta})^{-1} &= \left[ \frac{-1}{n} \sum_{i=1}^n \frac{y_i}{\hat{\theta}^2} \right]^{-1} \frac{1}{n}\sum_{i=1}^n \left(\frac{y_i}{\hat{\theta}} - 1 \right)^2 \left[ \frac{-1}{n} \sum_{i=1}^n \frac{y_i}{\hat{\theta}^2} \right]^{-1} \\
 &= \left[ \frac{-1}{n} \sum_{i=1}^n \frac{y_i}{\hat{\theta}^2} \right]^{-2} \frac{1}{n}\sum_{i=1}^n \left(\frac{y_i}{\hat{\theta}} - 1 \right)^2 \\
 &= \left[ \frac{-\hat{\theta}}{\hat{\theta}^2} \right]^{-2} \frac{1}{n}\sum_{i=1}^n \left(\frac{y_i}{\hat{\theta}} - 1 \right)^2 \\
 &= \left[ \frac{-1}{\hat{\theta}} \right]^{-2} \frac{1}{n}\sum_{i=1}^n \left(\frac{y_i}{\hat{\theta}} - 1 \right)^2 \\
 &= \hat{\theta}^2 \frac{1}{n}\sum_{i=1}^n \left(\frac{y_i}{\hat{\theta}} - 1 \right)^2 \\
 &= \hat{\theta}^2 \frac{1}{n\hat{\theta}^2}\sum_{i=1}^n \left(y_i - \hat{\theta} \right)^2 \\
 &= \frac{1}{n}\sum_{i=1}^n \left(y_i - \hat{\theta} \right)^2
\end{align}

From the above, we see that the sandwich estimate of variance for $\hat{\theta}$ is $\tfrac{1}{n}\sum_{i=1}^n (y_i - \hat{\theta})^2$, while the corresponding variance factor in the likelihood pivot is $\tfrac{1}{n}\sum_{i=1}^n (y_i - \theta^{\ast})^2$.  The only difference between the formulas in this example is the substitution of $\theta^{\ast}$ in the pivot quantity with $\hat{\theta}$ in the sandwich estimate of variance.

For the simulation study, I used a negative binomial with mean $3$ and variance $3.9$ as the true data-generating distribution.  I then generated $10,000$ samples of size ranging from $10$ to $100$.  For each sample, I constructed asymptotically correct 95\% confidence intervals based on the sandwich estimate of variance and its post hoc adjustments.  I also constructed a confidence interval by conducting a linear search on the pivot to find the parameter values that caused the pivot to equal the $2.5\%$ and $97.5\%$ quantiles of the standard normal distribution.  The results can be seen in Figure~\ref{poissim}

\begin{figure}
\centering
\includegraphics[width=0.5\textwidth]{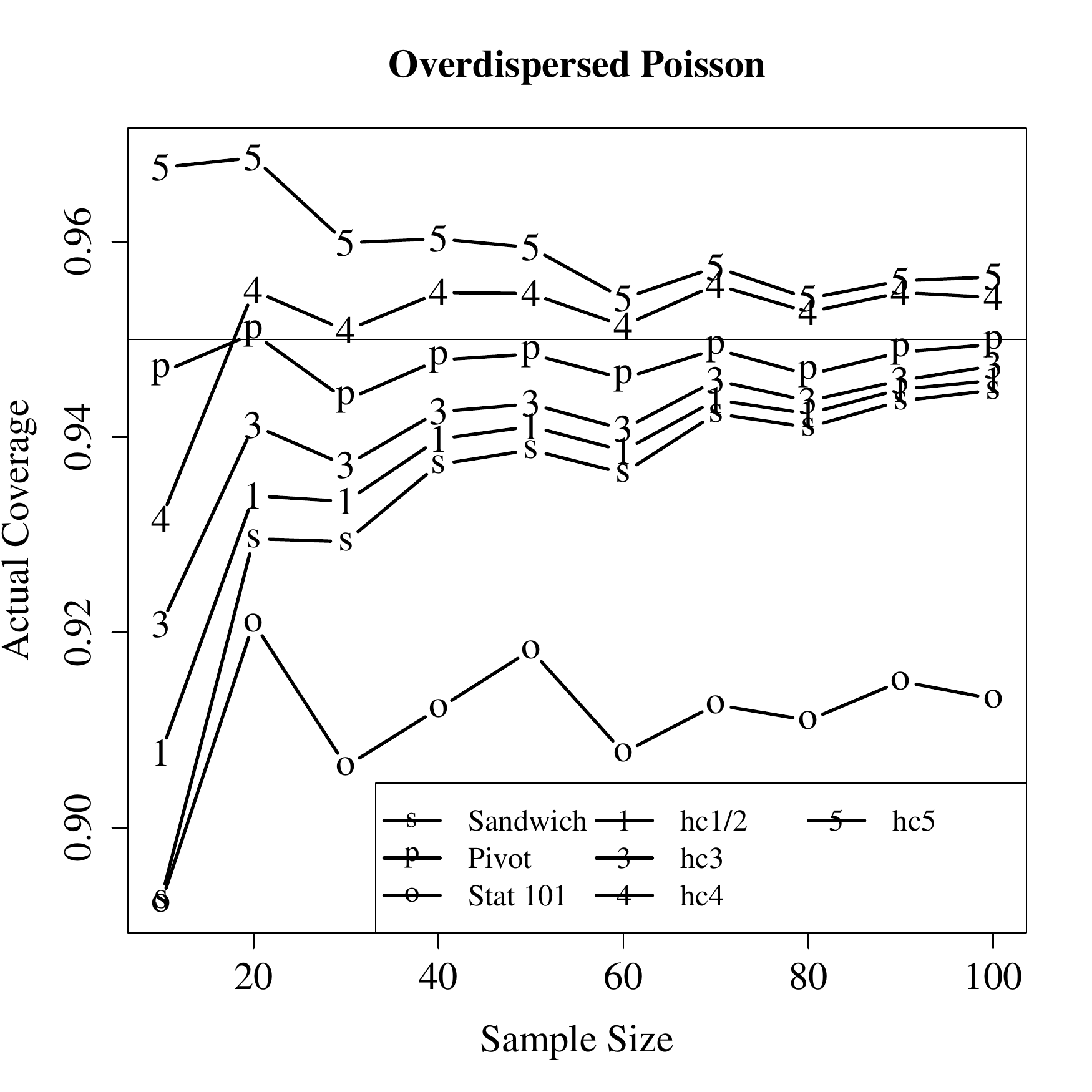}
\caption{Results from simulation study using sandwich-based confidence intervals and confidence intervals generated by the likelihood pivot when applied to an assumed Poisson data-generating model.}
\label{poissim}
\end{figure}

The pivot-based confidence intervals perform well at all sample sizes.  Coverage hovers around $95\%$ throughout all sample sizes tested.  Standard maximum likelihood theory undercovers the true parameter, a reflection of the fact that we are assuming the mean and variance are equal when the variance is actually larger.  The sandwich estimate of variance performs as one would expect.  It undercovers the true mean in all sample sizes tested, but coverage improves as the sample size increases.  The post hoc intervals tend to mirror this as well.  The post hoc intervals associated with \citet{fay2001small} actually tend to overcover the true parameter.  \citet{fay2001small} use an adjusted sandwich and the Student t-distribution quantiles instead of normally distributed quantiles.  This two-pronged approach might be over-correcting the sandwich's undercoverage, resulting in the overcoverage we see in the simulation results.  

In this simple scenario, we see that all confidence interval construction methods that account for model misspecification perform well ($\pm 1$ percentage point), especially as sample sizes increase.  While the likelihood pivot performs the best in this simulation, the improvement is not great.  This is a reflection of the fact that there is only one parameter in our working model, and hence the number of quantities being estimated in any method is small.

\subsubsection{Regression Through the Origin} \label{regthruorig}
Regression models are one of the most fundamental statistical models used by practitioners.  Any technique that improves on methods using the sandwich estimate of variance must work in a regression setting.  In this example, we assume as our working model a two-parameter regression through the origin
\begin{align}
y_i &\sim \theta x_i + \epsilon_i \\
\epsilon_i &\overset{i.i.d.}{\sim} N(0,\sigma^2) \qquad i = 1,\ldots,n
\end{align}
where the parameter of interest is $\theta$ and $\sigma^2$ is a nuisance parameter.  For this working model, the pivot does not depend on $\sigma^2$ so it can be unknown.  We then consider the behavior of the sandwich-based and pivot-based confidence intervals in the case of heteroscedastic errors.  Heteroscedastic errors are a commonly assumed model misspecification, and one of special interest to economists.  Specifically the true model is 
\begin{align}
y_i &\sim \theta x_i + \epsilon_i \\
\epsilon_i &\sim N(0,1 + \mid x_i \mid) \qquad i = 1,\ldots,n
\end{align}
where the errors depend directly on the covariate.  

Under this model, the log-likelihood, its derivative with respect to $\theta$, and the square of the derivative are
\begin{align}
\sum_{i=1}^n l(y_i, \theta) &= -n\log(\sqrt{2\pi}\sigma) + \frac{-1}{2\sigma^2} \sum_{i=1}^n (y_i - \theta x_i)^2 \\
\sum_{i=1}^n l_{\theta}(y_i, \theta) &= \frac{-1}{2\sigma^2} \sum_{i=1}^n -2x_i(y_i - \theta x_i) = \frac{1}{\sigma^2} \sum_{i=1}^n x_i(y_i - \theta x_i) \\
\sum_{i=1}^n l_{\theta}(y_i, \theta)^2 &= \frac{1}{\sigma^4} \sum_{i=1}^n x_i(y_i - \theta x_i)^2
\end{align}
respectively.  Plugging these into line (\ref{mainpivot}), gives us the pivot for this example
\begin{align}
\left\{ \frac{1}{n} \sum_{i=1}^n [l_{\theta}(y_i, \theta^{\ast})]^2 \right\}^{-1/2} \frac{1}{\sqrt{n}} \sum_{i=1}^n l_{\theta}(y_i, \theta^{\ast}) &= \sqrt{n} \frac{\frac{1}{n}\sum_{i=1}^n (x_i y_i - \theta^{\ast} x_i^2)}{\sqrt{\frac{1}{n} \sum_{i=1}^n \left( x_iy_i - \theta^{\ast} x_i^2\right)^2}}.
\end{align}

This simulation was performed similarly to the previous one.  We generated data for samples sizes ranging from $10$ to $100$.  For each sample size, we generated $10,000$ datasets.  For each dataset, we computed $95\%$ confidence intervals using standard MLE theory, the sandwich, the pivot, and the post hoc corrections to the sandwich.  We then computed whether or not the true value of the parameter was covered by the intervals.  The results are displayed graphically in Figure \ref{simplereg}.

\begin{figure}
\centering
\includegraphics[width=0.5\textwidth]{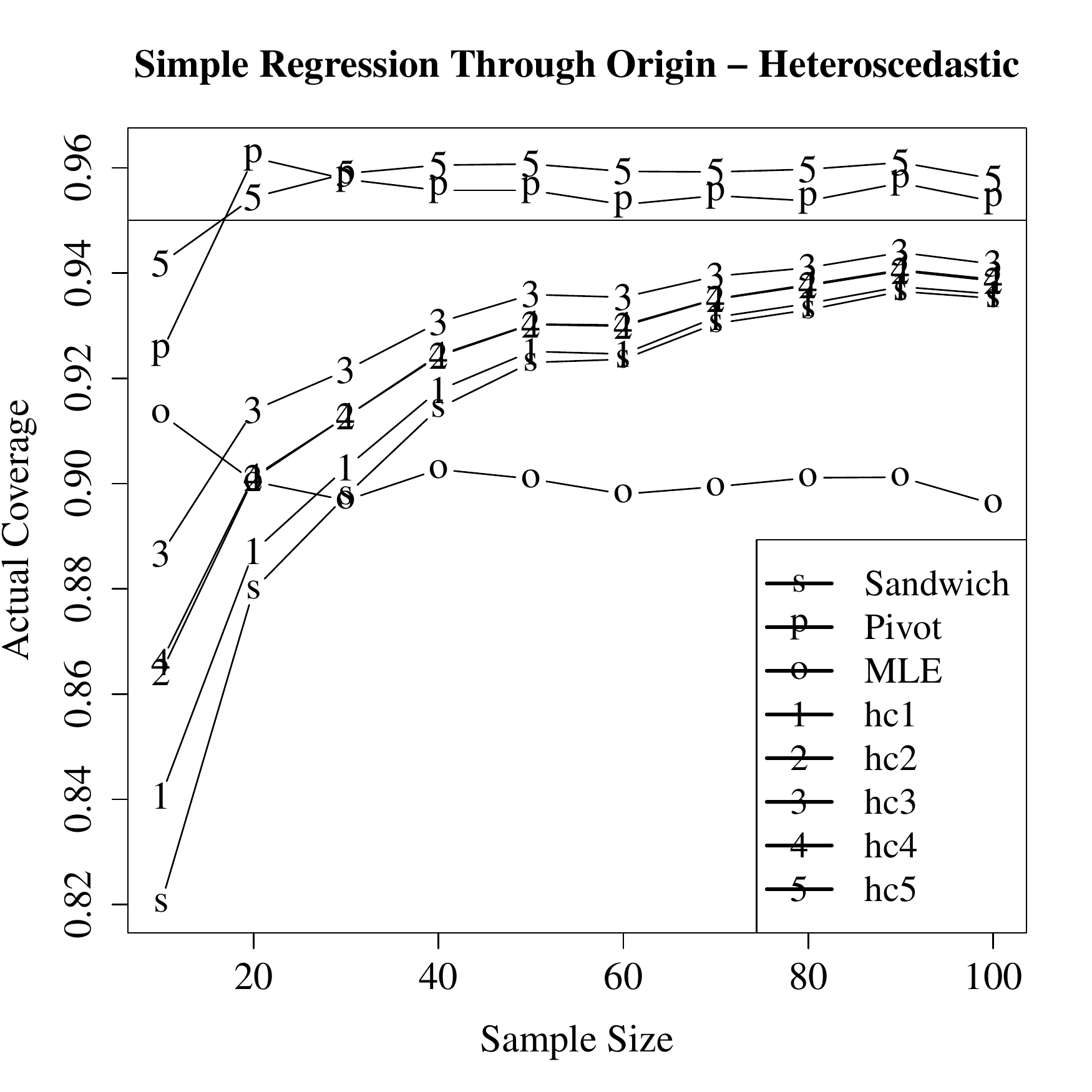}
\caption{Graph showing actual confidence interval coverage of the true slope parameter when using various methods to compute confidence intervals.  For each sample size, 10,000 random datasets were generated, maximum misspecified likelihood estimates were calculated, and confidence intervals were computed.}
\label{simplereg}
\end{figure}

The standard MLE confidence intervals undercover the pseudo-true parameter throughout the range of sample sizes.  Since the assumed model does not account for the variability in errors, we expect confidence intervals to be too small and coverage of the true parameter to be less than nominal.  The sandwich-based confidence intervals yield about $82\%$ coverage at the smallest sample size generated, and increase coverage as sample size increases.  Since the sandwich has this known bias, this is not surprising.  Three of the five post hoc methods, \citep{horn1975estimating}, \citep{hinkley1977jackknifing}, and \citep{efron1982jackknife}, are simply scaled-up versions of the sandwich.  As expected they yield coverage similar to, but slightly better than, the unadjusted sandwich.  The fourth post hoc method uses the sandwich variance but with a slightly wider quantile spread, e.g. the $98\%$ quantile instead of $97.5\%$ quantile.  As such it also covers the true parameter with similar but slightly better coverage than the regular sandwich.  The fifth post hoc method yields similar but slightly higher coverage than the pivot.  Not only does it use an adjusted sandwich variance estimate, but it also uses Student-t quantiles instead of standard normal.  This combination pushes coverage much higher than unaltered sandwich-based coverage.  The pivot-based intervals perform the best out of all options tested and yield coverage of the pseudo-true parameter close to $95\%$ throughout the range of sample sizes tested and versus all competing methods.  

The results of this subsection match what we saw in Subsection \ref{poisdat}.  Confidence intervals based on the likelihood pivot provide coverage closer to nominal than sandwich-based methods when sample sizes are small.  It is no surprise that the post hoc adjustments to the sandwich seem to track one another in Figure \ref{simplereg}, since they generally inflate the sandwich in various ways.  What is of note is how quickly and consistently the likelihood pivot based intervals reach nominal coverage, especially compared to the other intervals.  This has happened in both examples so far.

\section{Multi-Parameter Theory and Examples}\label{multparam}
In this section we consider inference using multi-parameter models.  Confidence regions for sets of parameters are important because they yield reasonable combinations of parameters, something that is not readily apparent from multiple, single-parameter confidence intervals.  We show how to extend the pivot approach to this situation and compare confidence regions of the pivot approach to those based on the sandwich plug-in estimate of variance.  Confidence regions based on the sandwich estimate of variance are values of $\bm{\theta}$ such that
\begin{align} \label{sandmatineq}
n (\hat{\bm{\theta}}_n - \bm{\theta})^T \left[ \mathbf{A}_n(\hat{\bm{\theta}}_n) \mathbf{B}_n(\hat{\bm{\theta}}_n)^{-1} \mathbf{A}_n(\hat{\bm{\theta}}_n) \right] (\hat{\bm{\theta}}_n - \bm{\theta}) &\leq \chi^2_{p, 1-\alpha}
\end{align}
where $p$ is the dimension of the parameter of interest and $\chi^2_{p, 1-\alpha}$ is the $100(1-\alpha) \%$ quantile of a chi-squared distribution with $p$ degrees of freedom.  

Since we are assuming model misspecification, the only distributional result we have is the multivariate central limit theorem.  This gives us an approximate distribution for the multivariate pivot
\begin{align}
\frac{1}{\sqrt{n}} \sum_{i=1}^n l_{\bm{\theta}}(y_i, \bm{\theta}) &\overset{\cdot}{\sim} N(\mathbf{0}_p, \mathbf{B}(\bm{\theta})). \label{approxmultpiv}
\end{align}
We know from standard theory for multivariate distributions that if 
\begin{align}
\mathbf{Z} &\sim N(\mathbf{0}_p, \mathbf{B})
\end{align}
then 
\begin{align}
\mathbf{Z}^T \mathbf{B}^{-1} \mathbf{Z} &\sim \chi^2_p. \label{exactmultpiv}
\end{align}
So combining (\ref{approxmultpiv}) and (\ref{exactmultpiv}) we have that
\begin{align}
\left[ \frac{1}{\sqrt{n}} \sum_{i=1}^n l_{\bm{\theta}}(y_i, \bm{\theta}) \right]^T \mathbf{B}(\bm{\theta})^{-1} \left[ \frac{1}{\sqrt{n}} \sum_{i=1}^n l_{\bm{\theta}}(y_i, \bm{\theta}) \right] &\overset{\cdot}{\sim} \chi^2_p.
\end{align}
We can now generate a confidence region through the pivot by finding solutions, $\bm{\theta}$ to the inequality
\begin{align} \label{pivotmatineqprac}
n \left[ \frac{1}{\sqrt{n}} \sum_{i=1}^n l_{\bm{\theta}}(y_i, \bm{\theta}) \right]^T \mathbf{B}(\bm{\theta})^{-1} \left[ \frac{1}{\sqrt{n}} \sum_{i=1}^n l_{\bm{\theta}}(y_i, \bm{\theta}) \right] &\leq \chi^2_{p, 1-\alpha}
\end{align}
We wish to use an expression that more directly comparable to line \ref{sandmatineq}.  We will use a multivariate equivalent to Jennrich's Lemma 6 (Mean Value Theorem for functions of random variables) \citep{jennrich1969asymptotic}, to equate $\tfrac{1}{\sqrt{n}} \sum_{i=1}^n l_{\bm{\theta}}(y_i, \bm{\theta}) $ and $\overline{\mathbf{A}_n(\bm{\theta})} (\hat{\bm{\theta}}_n - \bm{\theta})$, up to a minus sign.  For our analytical work we will now use the equivalent statement
\begin{align} \label{pivotmatineq}
n (\hat{\bm{\theta}}_n - \bm{\theta})^T \overline{\mathbf{A}_n(\bm{\theta})} \mathbf{B}_n(\bm{\theta})^{-1} \overline{\mathbf{A}_n(\bm{\theta})} (\hat{\bm{\theta}}_n - \bm{\theta}) &\leq \chi^2_{p, 1-\alpha}.
\end{align}
The first difference between the multivariate pivot quantity and the similar sandwich-based quantity is in using $\bm{\theta}$ instead of $\hat{\bm{\theta}}_n$ in $\mathbf{B}_n(\cdot)^{-1}$.  This is equivalent to eliminating the plug-in assumption from the univariate case.  The second is the introduction of $\overline{\mathbf{A}_n(\bm{\theta})}$.  A few words need to be said about this second substitution.  Primarily, there is no single point $\bm{\theta}$ that satisfies Jennrich's Lemma 6 in the multivariate setting.  The reason is that there is no mean value theorem for vector-valued functions.  $\overline{\mathbf{A}_n(\bm{\theta})}$ is really a convex combination of matrices as described in \citet{furi1991mean}.  That is, there are $\bm{\theta}_1, \ldots, \bm{\theta}_k$ and $\omega_1, \ldots, \omega_k \geq 0$ with $\sum_{i=1}^k \omega_i = 1$ such that $\overline{\mathbf{A}_n(\bm{\theta})} = \sum_{i=1}^k \omega_i \mathbf{A}_n(\bm{\theta}_i)$.  Each $\bm{\theta}_i$ is a point in $p$-space where each coordinate is taken from either $\hat{\bm{\theta}}_n$ or $\bm{\theta}$, thus, $k \leq 2^p$.  When we talk about points that satisfy the pivotal quantity, we are referring to solutions $\bm{\theta}$ and our $k$ points are derived from $\hat{\bm{\theta}}$ and $\bm{\theta}$.

\subsection{Multi-Parameter Asymptotics}
\citet{lehmann1999elements} defines multivariate efficiency in terms of the difference of the asymptotic covariance matrices of two estimators.  Since we are not computing an explicit estimator, we do not have a covariance matrix, and we must instead consider what it means to be asymptotically equally efficient.  In the univariate case, we considered the finite boundary of two one-sided confidence intervals, and equivalently, the length of two-sided intervals.  The multivariate equivalent to univariate length is volume.  Thus for a given confidence level, two confidence-region-generating procedures are asymptotically equally efficient if their volumes are asymptotically equivalent.  In this subsection we will make the previous statement precise, explain why it is reasonable, and state a result in the multiple parameter case that is similar to the result for the one parameter case.

Let us define for a given $\alpha$ the sandwich equality and pivot equality respectively as the or-equal-to option in Inequalities (\ref{sandmatineq}) and (\ref{pivotmatineqprac}).
\begin{mythm} \label{multiasythm}
For any solution, $\breve{\bm{\theta}}_{n,PV}$, to the pivot equality and a corresponding solution, $\breve{\bm{\theta}}_{n,SW}$, to the sandwich equality in the same direction from $\hat{\bm{\theta}}_n$ as $\breve{\bm{\theta}}_{n,PV}$, $\sqrt{n} \| \breve{\bm{\theta}}_{n,PV} - \breve{\bm{\theta}}_{n,SW} \| \rightarrow_{a.s.} 0$ as $n\rightarrow \infty$.
\end{mythm}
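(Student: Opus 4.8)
The plan is to reduce this vector statement to the scalar comparison already carried out in Subsection~\ref{oneparamasy}. Because $\breve{\bm{\theta}}_{n,SW}$ is chosen in the same direction from $\hat{\bm{\theta}}_n$ as $\breve{\bm{\theta}}_{n,PV}$, I would write $\breve{\bm{\theta}}_{n,PV} = \hat{\bm{\theta}}_n + r_{PV}\,\bm{u}$ and $\breve{\bm{\theta}}_{n,SW} = \hat{\bm{\theta}}_n + r_{SW}\,\bm{u}$ for a common unit vector $\bm{u}$ and nonnegative scalars $r_{PV}, r_{SW}$. Substituting $\hat{\bm{\theta}}_n - \bm{\theta} = -r\,\bm{u}$ into the sandwich equality (the equality in \eqref{sandmatineq}) collapses the quadratic form to $n r^2 \bm{u}^T \mathbf{M}_{SW}\bm{u}$, where $\mathbf{M}_{SW} = \mathbf{A}_n(\hat{\bm{\theta}}_n)\mathbf{B}_n(\hat{\bm{\theta}}_n)^{-1}\mathbf{A}_n(\hat{\bm{\theta}}_n)$, so that $r_{SW}$ is determined in closed form. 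The pivot equality \eqref{pivotmatineq} collapses the same way to $n r^2 \bm{u}^T \mathbf{M}_{PV}(\breve{\bm{\theta}}_{n,PV})\bm{u}$, with $\mathbf{M}_{PV}(\bm{\theta}) = \overline{\mathbf{A}_n(\bm{\theta})}\,\mathbf{B}_n(\bm{\theta})^{-1}\,\overline{\mathbf{A}_n(\bm{\theta})}$, except that the middle matrix is now evaluated at the (unknown) solution, giving an implicit equation for $r_{PV}$. After canceling the common factor $1/\sqrt{n}$, the scaled gap becomes
\begin{align}
\sqrt{n}\,\|\breve{\bm{\theta}}_{n,PV} - \breve{\bm{\theta}}_{n,SW}\| &= \sqrt{\chi^2_{p,1-\alpha}}\,\left| \left(\bm{u}^T \mathbf{M}_{PV}(\breve{\bm{\theta}}_{n,PV})\,\bm{u}\right)^{-1/2} - \left(\bm{u}^T \mathbf{M}_{SW}\,\bm{u}\right)^{-1/2} \right|, \nonumber
\end{align}
which is the exact matrix analogue of the scalar quantity in \eqref{convplug}--\eqref{convergence}.

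Next I would establish consistency of both boundary points, the multivariate analogue of $\theta_{1,n} \rightarrow_{a.s.} \theta^{\ast}$ proved in the appendix for the one-parameter case. The matrix $\mathbf{M}_{SW}$ converges almost surely to $\mathbf{M} = \mathbf{A}(\bm{\theta}^{\ast})\mathbf{B}(\bm{\theta}^{\ast})^{-1}\mathbf{A}(\bm{\theta}^{\ast})$ by the strong law of large numbers, continuity, and $\hat{\bm{\theta}}_n \rightarrow_{a.s.} \bm{\theta}^{\ast}$; since $\mathbf{M}$ is positive definite, $\bm{u}^T \mathbf{M}_{SW}\bm{u}$ is eventually bounded below by a positive constant, forcing $r_{SW} = O_{a.s.}(1/\sqrt{n})$ and hence $\breve{\bm{\theta}}_{n,SW} \rightarrow_{a.s.} \bm{\theta}^{\ast}$. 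For the pivot I would break the apparent circularity in the implicit equation with the same a priori bound: because $\mathbf{M}_{PV}(\bm{\theta}) \rightarrow \mathbf{M}$ uniformly over a neighborhood of $\bm{\theta}^{\ast}$, the quadratic form $\bm{u}^T \mathbf{M}_{PV}(\cdot)\bm{u}$ is eventually bounded below there by a positive constant, so any boundary solution lying in that neighborhood satisfies $r_{PV} = O_{a.s.}(1/\sqrt{n})$, whence $\breve{\bm{\theta}}_{n,PV} \rightarrow_{a.s.} \bm{\theta}^{\ast}$ as well.

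With both solutions consistent, I would show $\mathbf{M}_{PV}(\breve{\bm{\theta}}_{n,PV}) \rightarrow_{a.s.} \mathbf{M}$. The only new ingredient relative to the one-parameter proof is the convex-combination term $\overline{\mathbf{A}_n(\breve{\bm{\theta}}_{n,PV})} = \sum_i \omega_i \mathbf{A}_n(\bm{\theta}_i)$. Each defining corner $\bm{\theta}_i$ has coordinates drawn from those of $\hat{\bm{\theta}}_n$ and $\breve{\bm{\theta}}_{n,PV}$, so every coordinate of $\bm{\theta}_i$ is trapped between two sequences that converge almost surely to $\bm{\theta}^{\ast}$; hence every $\bm{\theta}_i \rightarrow_{a.s.} \bm{\theta}^{\ast}$, and a uniform strong law on a compact neighborhood gives $\mathbf{A}_n(\bm{\theta}_i) \rightarrow_{a.s.} \mathbf{A}(\bm{\theta}^{\ast})$ for each $i$, so the convex combination (with weights in $[0,1]$ summing to one) converges to $\mathbf{A}(\bm{\theta}^{\ast})$ regardless of how the weights behave. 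Combined with $\mathbf{B}_n(\breve{\bm{\theta}}_{n,PV})^{-1} \rightarrow_{a.s.} \mathbf{B}(\bm{\theta}^{\ast})^{-1}$, this yields $\mathbf{M}_{PV}(\breve{\bm{\theta}}_{n,PV}) \rightarrow_{a.s.} \mathbf{M}$. Applying the continuous mapping (Mann--Wald) theorem to $\mathbf{M} \mapsto (\bm{u}^T \mathbf{M}\,\bm{u})^{-1/2}$, both reciprocal square roots in the displayed gap converge almost surely to the common limit $(\bm{u}^T \mathbf{M}\,\bm{u})^{-1/2}$, so their difference, and therefore $\sqrt{n}\,\|\breve{\bm{\theta}}_{n,PV} - \breve{\bm{\theta}}_{n,SW}\|$, tends to zero almost surely.

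I expect the main obstacle to be the rigorous handling of $\overline{\mathbf{A}_n(\cdot)}$ together with the implicit definition of $r_{PV}$. In the univariate argument $\bar\theta$ was a single mean-value point with $b_n \in [0,1]$, whereas here $\overline{\mathbf{A}_n(\bm{\theta})}$ is a data-dependent convex combination of up to $2^p$ matrices whose weights $\omega_i$ are themselves random and not separately controlled; the argument must therefore avoid tracking the weights and instead lean only on the joint convergence of all corner points to $\bm{\theta}^{\ast}$ and a uniform strong law of large numbers over a neighborhood (the multivariate Jennrich regularity conditions). The accompanying difficulty is justifying that the relevant pivot solution stays in such a neighborhood so that the a priori lower bound on the quadratic form applies; this is exactly the step where I would invoke, or extend to $p$ dimensions, the consistency result $\breve{\bm{\theta}}_{n,PV} \rightarrow_{a.s.} \bm{\theta}^{\ast}$ deferred to the appendix.
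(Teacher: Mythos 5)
Your proposal is correct, but it reaches the conclusion by a genuinely different final argument than the paper's. The paper never solves either equality for a radius: it first proves (Lemma~\ref{bigO}) that any pivot solution satisfies $\| \tilde{\bm{\theta}}_{n,PV} \| = O(1/\sqrt{n})$ almost surely, then plugs the pivot solution into the sandwich quadratic form, bounds the resulting perturbation term by $\zeta/n$ using the quadratic-form bound of Lemma~\ref{epsmatbd}, and squeezes the sandwich solution between the rescaled points $\hat{\bm{\theta}}_n - \sqrt{1-\xi_-}\,\tilde{\bm{\theta}}_{n,PV}$ and $\hat{\bm{\theta}}_n - \sqrt{1+\xi_-}\,\tilde{\bm{\theta}}_{n,PV}$, so that the scaled gap is at most $\sqrt{\chi^2_{p,1-\alpha}/\lambda_{lo}}\,\lvert 1-\sqrt{1-\xi_-}\rvert$, which can be made arbitrarily small. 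Your radial parametrization instead makes the comparison exact: $r_{SW}$ is in closed form, $r_{PV}$ satisfies an identity of the same shape, the $\sqrt{n}$ cancels identically, and everything reduces to a difference of reciprocal square roots of scalar quadratic forms. This is cleaner and dispenses with the paper's $\zeta$/$\xi_{\pm}$ bookkeeping; what it requires in exchange is one point of care you gloss over: the direction $\bm{u} = \bm{u}_n$ varies with $n$, so there is no fixed "common limit" $(\bm{u}^T \mathbf{M} \bm{u})^{-1/2}$ to invoke Mann--Wald on. The fix is immediate and uniform in $\bm{u}$: $\lvert \bm{u}^T(\mathbf{M}_n - \mathbf{M})\bm{u}\rvert \leq p^2 \| \mathbf{M}_n - \mathbf{M}\|$ for unit $\bm{u}$ (Lemma~\ref{epsmatbd}), $\bm{u}^T \mathbf{M} \bm{u} \geq \lambda^{\ast}_{min} > 0$ by positive definiteness, and $x \mapsto x^{-1/2}$ is Lipschitz on $[\lambda^{\ast}_{min}/2, \infty)$. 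Beyond that final step, both proofs rest on the same supporting machinery: your corner-point/convex-combination argument for $\overline{\mathbf{A}_n(\cdot)}$ is exactly the content of Lemmas~\ref{convexsum}, \ref{abaconv}, and \ref{anbnconv}, your a priori $O(1/\sqrt{n})$ bound is Lemma~\ref{bigO}, and both treatments share the same restriction (which the paper asserts away via its intermediate value theorem remark) that solutions are only controlled inside the $\delta$-neighborhood of $\bm{\theta}^{\ast}$ where the uniform bounds hold, so "any solution" in the theorem statement effectively means any solution in that neighborhood.
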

\begin{proof}
In the Appendix.
\end{proof}

In one sense, this result is the multivariate equivalent of the the result in line (\ref{uniasyequ}).  In the univariate case, the boundary of a one-sided confidence interval is a point.  Subsection \ref{oneparamasy} proves that the boundary points for confidence intervals generated by the sandwich estimate of variance and the pivot converge in norm, even when multiplied by $\sqrt{n}$.  Here we state that the boundary points of the multivariate confidence region converge pointwise, when pointwise is taken to mean a fixed direction from the sequence of maximum misspecified likelihood estimators  $\{\hat{\bm{\theta}}_n\}$.  We also state that this convergence persists under multiplication by $\sqrt{n}$.  

In another sense, this result is one way to formalize the statement ``the volumes of the confidence regions are asymptotically equivalent''.  While not computing volumes explicitly, we are demonstrating that the boundaries of the respective confidence regions are converging to each other.  If the boundaries of the regions are converging to each other, then the respective volumes must also be converging to each other, which is what it means for the volumes to be asymptotically equivalent.  If the volumes are asymptotically equivalent at equal confidence levels, then the confidence region procedures are asymptotically equally efficient.  

\subsection{Multiparameter Simulation Example}
Having compared the sandwich-based pivot and the likelihood pivot asymptotically, we now consider finite sample comparisons.  In this example we use as our working model a simple linear regression
\begin{align}
y_i &= \theta_0 + \theta_1 x_i + \epsilon_i \label{multSLR}\\
\epsilon_i &\overset{i.i.d.}{\sim} N(0, \sigma^2) \label{multSLRerr}
\end{align}
where the parameter of interest is $\bm{\theta} = (\theta_0, \theta_1)^T$ and $\sigma^2$ is a nuisance parameter.  As in the regression through the origin example, we can assume $\sigma^2$ known, but since it cancels in the pivot we could also assume it is an unknown without loss of generality.  

We evaluate pivot confidence regions based on the model described by (\ref{multSLR}) and (\ref{multSLRerr}) when the true data-generating mechanism is 
\begin{align}
y_i &= \theta_0 + \theta_1 x_i + \epsilon_i \\
\epsilon_i &\overset{indep}{\sim} N(0, 1+|x_i|)
\end{align}
with heteroscedastic errors depending on the covariate.  Under this model the relevant term of the pseudo-log-likelihood and necessary partial derivatives are given below:
\begin{align}
\sum_{i=1}^n l(y_i, \bm{\theta}) &= -n\log(\sqrt{2\pi}\sigma) + \frac{-1}{2\sigma^2} \sum_{i=1}^n (y_i - \theta_0 - \theta_1 x_i)^2 \\
\sum_{i=1}^n l_{\theta_0}(y_i, \bm{\theta}) &= \frac{1}{2\sigma^2} \sum_{i=1}^n (y_i - \theta_0 - \theta_1 x_i) \\
\sum_{i=1}^n l_{\theta_1}(y_i, \bm{\theta}) &= \frac{1}{2\sigma^2} \sum_{i=1}^n x_i(y_i - \theta_0 - \theta_1 x_i) .
\end{align}
We can write the factors in the pivot inequality (\ref{pivotmatineqprac}) as
\begin{align}
l_{\bm{\theta}} &=
\begin{pmatrix}
\frac{1}{2n\sigma^2} \sum_{i=1}^n (y_i - \theta_0 - \theta_1 x_i)  \\
\frac{1}{2n\sigma^2} \sum_{i=1}^n x_i(y_i - \theta_0 - \theta_1 x_i) 
\end{pmatrix} \\
\mathbf{B}_n(y,\bm{\theta}) &=
\begin{pmatrix}
\frac{1}{4n\sigma^4} \sum_{i=1}^n (y_i - \theta_0 - \theta_1 x_i)^2  & \frac{1}{4n\sigma^4} \sum_{i=1}^n x_i(y_i - \theta_0 - \theta_1 x_i)^2 \\
\frac{1}{4n\sigma^4} \sum_{i=1}^n x_i(y_i - \theta_0 - \theta_1 x_i)^2  & \frac{1}{4n\sigma^4} \sum_{i=1}^n x_i^2(y_i - \theta_0 - \theta_1 x_i)^2
\end{pmatrix}
\end{align}
and then the pivot is exactly the one as in inequality (\ref{pivotmatineqprac}).

A simulation example was performed similarly to the one-parameter examples.  We generated data for samples sizes ranging from $10$ to $100$.  For each sample size, we generated $10,000$ datasets.  For each dataset, we computed whether or not the true value of the parameter was covered by the specified confidence regions.  The results are shown in Figure \ref{simpleregmult}.  We only show results for the sandwich, the pivot, and the hc1, hc2, and hc3 sandwich adjustments.  The hc4 and hc5 methods were designed for scalar, linear combinations of parameters and are not adaptable to a multivariate setting.  

\begin{figure}
\centering
\includegraphics[width=0.5\textwidth]{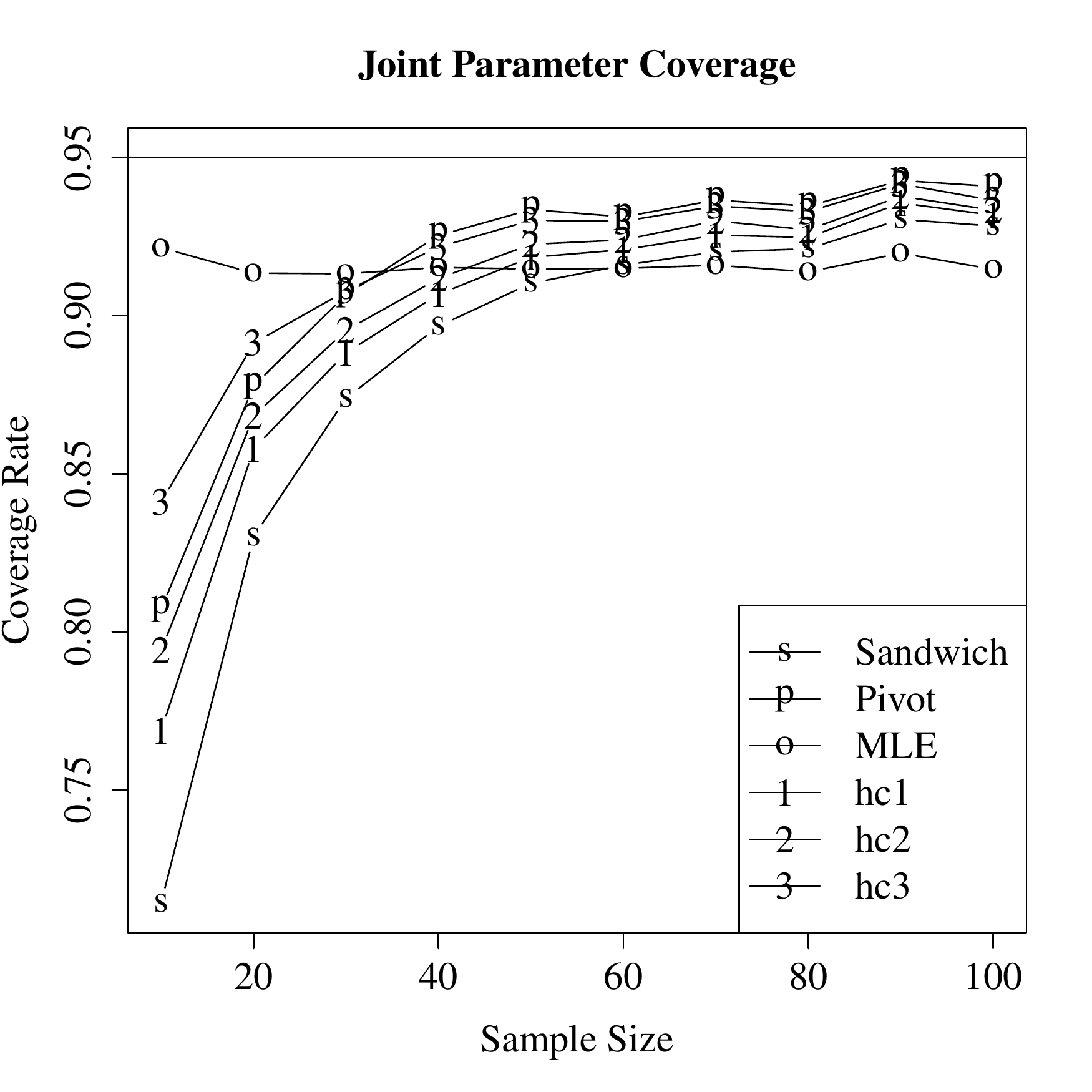}
\caption{Graph showing confidence interval coverage of the true parameter vector using various methods.}
\label{simpleregmult}
\end{figure}

The results in this multi-parameter simulation mirror the results from the one-parameter simulations.  The confidence regions from standard maximum likelihood theory undercover the parameter vector consistently throughout all sample sizes tested.  The sandwich-based confidence regions undercover the pseudo-true parameter, with coverage improving as sample size increases.  The post hoc methods all provide slightly better coverage than the sandwich, as they all increase the sandwich variance estimate.  The pivot-based confidence regions provide comparable coverage to the post hoc methods, providing the best coverage in this simulation in all but the very smallest sample sizes. 

We note that all methods which account for misspecification show poor coverage at the smallest sample sizes.  This coverage is much worse than what was seen in the one-parameter examples.  We think that this is related to the fact that as the dimension of the parameter increases, the number of quantities estimated in the covariance matrix increases with the square of said dimension.  As will be seen in the next section, coverage can be quite poor when the parameter vector is even modest in dimension.  

\section{Data Example}\label{realworld}
In this section we look at an example using a real dataset as our population.  A more realistic comparison of confidence interval methods can be made by using real data as opposed to simulated data.  Real world model misspecifications are not just limited to heteroscedastic errors either, so this allows us to compare the methods in the presence of more general model misspecification.

The data comes from the National Health and Nutrition Examination Survey (NHANES) dataset, 2011-2012.  We consider a scenario in which a researcher is interested in modeling body-mass index (BMI) as a function of several explanatory variables using a linear regression model.  The explanatory variables are taken from variables related to food consumption and activity level, factors that might reasonably affect an individual's BMI.  

We use as our population the $7,804$ complete cases in the dataset.  The pseudo-true parameters of the model will be the ordinary least-squares estimates applied to these $7,804$ cases.  We ran a linear regression of BMI on all the selected explanatory variables to get our pseudo-true parameter values.  The model we used was a standard regression model
\begin{align}
\text{BMI} &= \theta_0 + \theta_1 \times \text{kcal} + \theta_2 \times \text{sugar} + \theta_3 \times \text{fat} + \theta_4 \times \text{inact} + \theta_5 \times \text{gender} + \epsilon \\
\epsilon &\sim N(0, \sigma^2)
\end{align}
with ``kcal'' being the kilocalorie consumption, ``inact'' being a numerical measurement of inactivity, and $\epsilon_i$ representing additional across-subject variability.  Next we sampled small datasets from the population to mimic performing multiple small-scale studies.  We then ran regressions using those samples, and computed confidence region coverage for all covariate coefficients together.  As in the simple linear regression model, we compared the coverage rates of standard maximum likelihood, the sandwich, the pivot, and the hc1, hc2, and hc3 adjustments.  The pivot used was identical in structure to the one in the multivariate simulation example, the only difference being the number of covariates was greater in this example.  The results can be see in Figure \ref{realworldmult}.  

\begin{figure}
\centering
\includegraphics[width=0.5\textwidth]{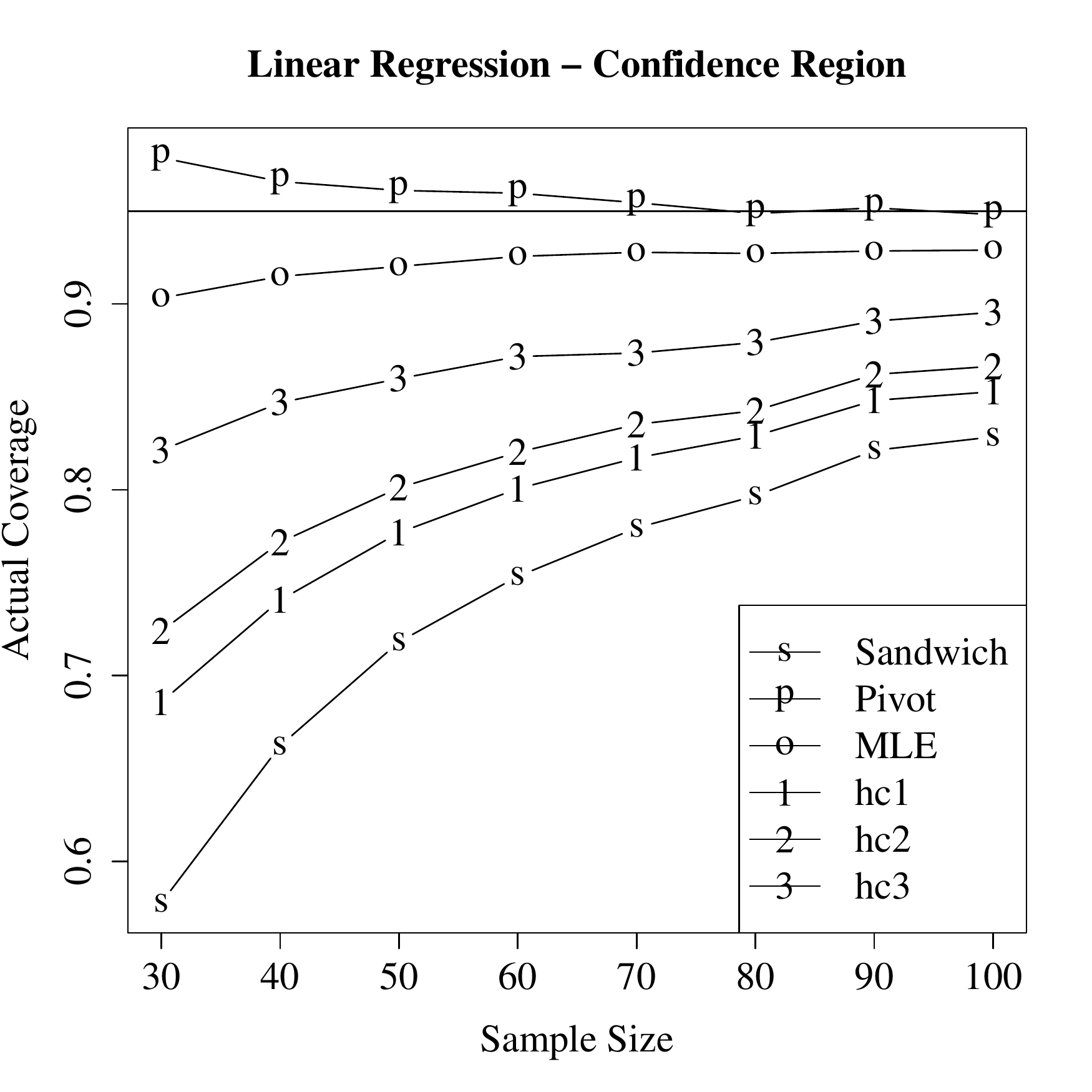}
\caption{Graph showing confidence interval coverage of the pseudo-true parameter vector value in the NHANES data using various methods.  The horizontal line is the 95\% line.}
\label{realworldmult}
\end{figure}

The standard maximum likelihood method undercovers the pseudo-true vector parameter, showing just under $93\%$ coverage at the higher sample sizes.  The sandwich does particularly poorly at first, but improves as sample size increases.  The post hoc methods known as hc1, hc2, and hc3 do slightly better than the sandwich, again because they start with the sandwich and increase the variance estimates by varying amounts.  The pivot starts out with slightly high coverage, but quickly drops to $95\%$, and is the only method that achieves $95\%$ coverage within this simulation.  

The results of this example are consistent with the results in previous simulations.  What this example highlights is that the small sample, model misspecification scenarios under which the likelihood pivot based confidence intervals perform better than sandwich based intervals need not be limited to the examples examined in previous sections.  It also highlights the theme running throughout all of these examples:  replacing $\theta^{\ast}$ with $\hat{\theta}$ can seriously lower confidence interval and confidence region coverage.  This can cause asymptotically correct inference to differ significantly from nominally correct inference.  

\section{Discussion}\label{discuss}
We have presented a pivot-based method of inference that outperforms the sandwich and its variants in small sample sizes.  We further showed that this method incurs no efficiency loss at large sample sizes, when compared to sandwich-based inference.  Several simulation examples were presented which strengthen our argument that making fewer assumptions leads to better inference.  

A natural extension of this work would be to use pivots in a Bayesian context.  The seminal paper on Bayesian work with pivots is \citet{monahan1992proper}.  The main result of that paper is that likelihoods of pivotal quantities are, in general, not true likelihoods.  Consequently, proper Bayesian analysis using pivots is impossible in most cases.  However, we believe that some kind of pseudo-Bayesian analysis might be feasible.  Our reasoning is simple:  for a given set of data, a pivotal quantity should contain some information about the parameter of interest.  That information is the basis of the frequentist analysis performed in this paper.  A well-designed pseudo-Bayesian analysis might be able to take advantage of that information as well.  

\appendix
\section{Univariate Results}

We will show that $B(\theta^{\ast})$ and $A(\theta^{\ast})$ are positive on $\Theta$.  We also need to show that $A_n(\theta^{\ast})$ is positive on this same subset, but due to the uniform central limit theorems, it suffices to show that $A(\theta^{\ast})$ is positive on the required subset.  Hence, for large enough $n$, $A_n(\theta^{\ast})$ is also positive on that subset.

\begin{mylemma}
Under the assumptions of \citep{white1982maximum}, $B(\theta^{\ast}) >0$.  
\end{mylemma}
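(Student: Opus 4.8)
The plan is to exploit the fact that $B(\theta^{\ast})$ is, by definition, the expectation of a square, so that nonnegativity is automatic and the entire content of the lemma reduces to excluding the degenerate case in which that square vanishes almost everywhere. First I would write $B(\theta^{\ast}) = E\{[l_{\theta}(y,\theta^{\ast})]^2\}$, with the expectation taken under the true data-generating density $g$, and observe immediately that $B(\theta^{\ast}) \geq 0$. The work is therefore to upgrade this to a strict inequality, which is exactly what is needed for the pivot in line (\ref{mainpivot}) to be well defined.

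The key reduction I would make is to recognize $B(\theta^{\ast})$ as a variance. Because $\theta^{\ast}$ minimizes the Kullback--Leibler divergence, it maximizes $E[\log f(y;\theta)]$, and White's domination conditions permit differentiating under the integral sign, yielding the first-order condition $E[l_{\theta}(y,\theta^{\ast})] = 0$. Hence $B(\theta^{\ast}) = E\{[l_{\theta}(y,\theta^{\ast})]^2\} = \mathrm{Var}[l_{\theta}(y,\theta^{\ast})]$. A variance is zero if and only if its argument is almost surely equal to its mean, so $B(\theta^{\ast}) = 0$ would force $l_{\theta}(y,\theta^{\ast}) = 0$ for $g$-almost every $y$.

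The final step, which I expect to be the main obstacle, is to show this degenerate case cannot occur under the assumptions of \citet{white1982maximum}. If the score vanished $g$-almost surely, then $\theta^{\ast}$ would be a critical point of $\log f(y;\cdot)$ simultaneously for almost every observed $y$, which is precisely the kind of local non-identifiability that White's regularity conditions are designed to exclude; an identically vanishing score is incompatible with the nonsingularity imposed on the relevant information-type matrix at $\theta^{\ast}$ together with the requirement that $\theta^{\ast}$ be the unique interior minimizer of the divergence. I would therefore conclude that $g(\{y : l_{\theta}(y,\theta^{\ast}) \neq 0\}) > 0$, so the variance is strictly positive and $B(\theta^{\ast}) > 0$. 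The delicate part is to extract this non-degeneracy cleanly as a consequence of White's hypotheses rather than smuggle it in as a fresh assumption, so I would take care to pin the argument to the specific identification and nonsingularity conditions in \citet{white1982maximum} that deliver it.
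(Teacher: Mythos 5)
Your proposal reaches the correct conclusion and ultimately rests on the same ingredient as the paper's proof, but it takes a longer route than necessary, and the detour turns out to be vacuous. The paper's argument is two steps: $B(\theta^{\ast}) = E\{[l_{\theta}(y;\theta^{\ast})]^2\} \geq 0$ because it is the expectation of a square, and Assumption A6 of \citet{white1982maximum} states that $B(\theta^{\ast})$ is nonsingular, which for a one-dimensional (scalar) parameter is literally the statement $B(\theta^{\ast}) \neq 0$; combining the two gives $B(\theta^{\ast}) > 0$. What you anticipate as ``the main obstacle''---excluding the degenerate case $B(\theta^{\ast}) = 0$---is thus dispatched in one line, because in one dimension nonsingularity of the scalar $B(\theta^{\ast})$ \emph{is} the exclusion of that case; there is nothing to derive. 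Your intermediate machinery is also heavier than needed even on its own terms: the variance interpretation via the first-order condition $E[l_{\theta}(y;\theta^{\ast})] = 0$ is not required to analyze degeneracy, since $E[X^2] = 0$ forces $X = 0$ almost surely regardless of whether $X$ has mean zero; and the chain ``$B(\theta^{\ast}) = 0$ implies the score vanishes $g$-a.s.\ implies $B(\theta^{\ast})$ is singular'' begins and ends at the same place, so the almost-sure statement does no work. That said, your instinct in the final paragraph is sound in one respect: identification of $\theta^{\ast}$ as the unique interior KL minimizer would \emph{not} by itself deliver $B(\theta^{\ast}) > 0$ (a score could vanish $g$-a.s.\ at a point that is still a unique maximizer through higher-order behavior), so the nonsingularity assumption A6 is genuinely where the strict positivity comes from---you just did not need to route through the degeneracy analysis to invoke it.
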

\begin{proof}
By definition,
\begin{align}
B(\theta) &= E[l_{\theta}(y; \theta)^2]
\end{align}
where the expectation is taken with respect to the true density.  Since we're taking the expectation of the square of a function, we have that
\begin{align}\label{Bnonneg}
B(\theta) &\geq 0
\end{align}
for all possible $\theta \in \Theta$.  

Assumption A6 in \citep{white1982maximum} states that $B(\theta^{\ast})$ is nonsingular.  For the one-dimensional case, that means that $B(\theta^{\ast}) \neq 0$.  Combined with (\ref{Bnonneg}), we have $B(\theta^{\ast}) > 0$.  
\end{proof}

\begin{mylemma}
Under the assumptions of \citep{white1982maximum}, $A(\theta^{\ast}) >0$.
\end{mylemma}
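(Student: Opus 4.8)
The plan is to mirror the proof of the preceding lemma for $B(\theta^{\ast})$ as closely as possible, replacing the ``square of a function'' step (which gave $B\ge 0$) with a second-order optimality argument for $A$. Recall that $A(\theta)=E[l_{\theta\theta}(y;\theta)]$ is the expectation under the true density $g$ of the second derivative of the assumed log-density. The pseudo-true parameter $\theta^{\ast}$ is by definition the Kullback--Leibler minimizer, which is equivalent to saying that $\theta^{\ast}$ maximizes $M(\theta):=E[\log f(y;\theta)]$. Under the regularity conditions of \citet{white1982maximum} this maximizer is interior to $\Theta\subset R^{p}$ and $M$ is twice differentiable with differentiation passing through the expectation, so $M'(\theta^{\ast})=E[l_{\theta}(y;\theta^{\ast})]=0$ and $M''(\theta^{\ast})=E[l_{\theta\theta}(y;\theta^{\ast})]=A(\theta^{\ast})$.

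First I would record the second-order necessary condition at the interior maximizer $\theta^{\ast}$, which yields a sign-definite (but non-strict) bound on $A(\theta^{\ast})=M''(\theta^{\ast})$. This is the analogue of line (\ref{Bnonneg}) in the previous lemma: there we used nonnegativity of a square, here we use the sign of a Hessian at an optimum. Next I would invoke Assumption A6 of \citet{white1982maximum}, which states that $A(\theta^{\ast})$ is nonsingular; in the one-dimensional case this is exactly $A(\theta^{\ast})\neq 0$. Combining the non-strict second-order bound with nonsingularity upgrades it to a strict inequality, finishing the argument. This is structurally identical to how $B(\theta^{\ast})\ge 0$ together with nonsingularity gave $B(\theta^{\ast})>0$.

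The main obstacle is the sign, and it must be handled with care. The second-order condition at a \emph{maximizer} of $M$ delivers $M''(\theta^{\ast})\le 0$, so $E[l_{\theta\theta}(y;\theta^{\ast})]$ is nonpositive; the computation in Section \ref{poisdat}, where $A_n(\hat\theta)=-\tfrac{1}{n}\sum y_i/\hat\theta^{2}<0$, confirms this direction. To obtain the strict positivity as stated one should therefore work with the information-type quantity $-A(\theta^{\ast})$ (equivalently, read $A$ with the negated-Hessian convention), for which the second-order condition gives $-A(\theta^{\ast})\ge 0$ and A6 gives $-A(\theta^{\ast})\neq 0$, hence $-A(\theta^{\ast})>0$. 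Once the convention is fixed so that $A$ denotes this strictly definite information quantity, the asserted strict positivity follows immediately, and it is precisely this strict definiteness that guarantees the invertibility of $A_n$ relied upon throughout the convergence argument in Subsection \ref{oneparamasy}.
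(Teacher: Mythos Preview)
Your approach is essentially the one the paper uses---a second-order optimality argument at the KL maximizer to get the nonstrict sign, followed by an appeal to Assumption A6 to upgrade to strictness---but two details diverge from the paper and one of them is a small gap.

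First, the sign: the paper \emph{defines} $A(\theta)=-E[l_{\theta\theta}(y;\theta)]$, with the minus already built in. So the second-order condition at the maximizer $\theta^{\ast}$ gives $A(\theta^{\ast})\ge 0$ directly, and the last paragraph of your proposal (working with $-A$ and reinterpreting conventions) is unnecessary. The Poisson example you cite, $A_n(\hat\theta)=-\tfrac{1}{n}\sum y_i/\hat\theta^{2}$, is actually consistent with this once you note that in that subsection the paper writes $l_{\theta\theta}$, not $A_n$, for that negative quantity.

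Second, and more substantively: you invoke ``Assumption A6 of \citet{white1982maximum}, which states that $A(\theta^{\ast})$ is nonsingular.'' That is what A6 says about $B(\theta^{\ast})$, but for $A$ the assumption is only that $\theta^{\ast}$ is a \emph{regular point} of $A(\theta)$, i.e.\ $A$ has constant rank in a neighborhood of $\theta^{\ast}$. In one dimension this allows $A(\theta^{\ast})=0$ provided $A$ vanishes identically near $\theta^{\ast}$. The paper closes this gap with an extra step you omit: if $A\equiv 0$ on a neighborhood then $E[l_{\theta}]$ is constant there, contradicting the uniqueness of $\theta^{\ast}$ as the KL minimizer. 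You should include that argument rather than assert nonsingularity of $A(\theta^{\ast})$ as part of A6.
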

\begin{proof}
By definition,
\begin{align}
A(\theta) &= -E[l_{\theta \theta}(y; \theta)].
\end{align}
The case when this is positive is the same as when the expectation of $l_{\theta \theta}$ is negative.  Suppose the contrary, namely that $\theta^{\ast}$ is such that $A(\theta^{\ast}) < 0$, or the expectation of $l_{\theta \theta}$ is positive.  Since $A_n(\hat{\theta}) \rightarrow_{a.s.} A(\theta^{\ast})$, for $n$ large enough, we have that $A_n(\hat{\theta}) < 0$.  In other words, the second derivative of the log-likelihood is positive, or in other words that the derivative of the score equation evaluated at $\theta^{\ast}$ is positive.  But this would mean that $\hat{\theta}$ is a minimum of the likelihood equation, and not a maximum.  Hence, we restrict ourselves to the cases where $A(\theta^{\ast})\geq 0$.  

Assumption A6 in \citep{white1982maximum} further states that $\theta^{\ast}$ is a ``regular'' point of $A(\theta)$.  \citet{white1982maximum} then defines a regular point as one in which $A(\theta)$ has constant rank in an open neighborhood of that point.  For the one-dimensional case, we have that if $A(\theta)=0$, then $A(\theta)$ has rank $0$.  If $A(\theta) \neq 0$, then $A(\theta)$ has rank $1$.  So, if $A(\theta^{\ast}) \geq 0$, then for the constant rank assumption to hold we must have $A(\theta^{\ast}) > 0$ or $A(\theta^{\ast})$ is identically $0$ in a neighborhood of $\theta_{\ast}$.  But if the latter condition holds, then $E[l_{\theta}]$ is constant in a neighborhood of $\theta^{\ast}$, and it no longer has a unique minimum at $\theta^{\ast}$.  Hence, $A(\theta^{\ast}) > 0$.  
\end{proof}

We now turn our attention to $\theta_{1,n}$ and $\theta_{2,n}$.  We will show that $\theta_{1,n}\rightarrow_{a.s.} \theta^{\ast}$ and $\theta_{2,n}\rightarrow_{a.s.} \theta^{\ast}$.  

\begin{mylemma}
Under the assumptions of \citep{white1982maximum}, $\theta_{2,n}\rightarrow_{a.s.} \theta^{\ast}$.  
\end{mylemma}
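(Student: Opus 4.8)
The plan is to work directly from the closed-form expression for $\theta_{n,2}$ given in Equation~(\ref{sandplug}),
\[
\theta_{n,2} = \hat{\theta}_n - \frac{A_n(\hat{\theta})^{-1} B_n(\hat{\theta})^{1/2} z_{1-\alpha}}{\sqrt{n}},
\]
and to show that the subtracted term vanishes almost surely while $\hat{\theta}_n$ converges to $\theta^{\ast}$. First I would invoke the strong consistency result of \citet{white1982maximum}, which under his assumptions gives $\hat{\theta}_n \rightarrow_{a.s.} \theta^{\ast}$; this handles the leading term and supplies the point toward which every argument below is evaluated.

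Next I would establish that the coefficient $A_n(\hat{\theta})^{-1} B_n(\hat{\theta})^{1/2}$ converges almost surely to the finite constant $A(\theta^{\ast})^{-1} B(\theta^{\ast})^{1/2}$. The care needed here is that $A_n$ and $B_n$ are evaluated at the random, data-dependent point $\hat{\theta}_n$, so a pointwise strong law of large numbers at $\theta^{\ast}$ does not by itself suffice. Instead I would appeal to the uniform almost-sure convergence supplied by \citet{white1982maximum} together with the continuity of $A$ and $B$ and the consistency $\hat{\theta}_n \rightarrow_{a.s.} \theta^{\ast}$, which jointly give $A_n(\hat{\theta}) \rightarrow_{a.s.} A(\theta^{\ast})$ and $B_n(\hat{\theta}) \rightarrow_{a.s.} B(\theta^{\ast})$. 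Lemmas~1 and~2 are exactly what make the subsequent nonlinear operations legitimate: since $B(\theta^{\ast}) > 0$ the square root is continuous in a neighborhood of $B(\theta^{\ast})$, and since $A(\theta^{\ast}) > 0$ the reciprocal is continuous in a neighborhood of $A(\theta^{\ast})$, so by the almost-sure continuous mapping (Mann--Wald) theorem $A_n(\hat{\theta})^{-1} B_n(\hat{\theta})^{1/2} \rightarrow_{a.s.} A(\theta^{\ast})^{-1} B(\theta^{\ast})^{1/2}$.

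Finally I would combine these facts. The quantity $A_n(\hat{\theta})^{-1} B_n(\hat{\theta})^{1/2} z_{1-\alpha}$ converges almost surely to a finite constant and is therefore almost surely bounded, so multiplying it by $1/\sqrt{n} \rightarrow 0$ drives the entire subtracted term to zero almost surely. Adding this to $\hat{\theta}_n \rightarrow_{a.s.} \theta^{\ast}$ yields $\theta_{n,2} \rightarrow_{a.s.} \theta^{\ast}$, as claimed.

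The main obstacle is the second step: justifying that the sample quantities evaluated at the random argument $\hat{\theta}_n$ still converge to their population values at $\theta^{\ast}$. A naive pointwise strong law does not apply because the evaluation point is itself random; the resolution is to use the uniform strong law that \citet{white1982maximum} already provides under his regularity conditions, so that the consistency of $\hat{\theta}_n$ can be transferred through the continuous maps $A$ and $B$. Everything else is routine Slutsky-type bookkeeping, resting on the positivity guaranteed by the two preceding lemmas.
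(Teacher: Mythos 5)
Your proposal is correct and follows essentially the same route as the paper: start from the closed-form expression for $\theta_{n,2}$, use White's strong consistency of $\hat{\theta}_n$, handle the random evaluation point via uniform convergence plus continuity (the paper packages this step as Lemma 4 of \citet{amemiya1973regression}), apply Mann--Wald to the inverse and square root, and let the $1/\sqrt{n}$ factor kill the correction term. Your explicit appeal to the positivity lemmas to justify continuity of the inverse and square root is a slightly more careful rendering of the same argument, not a different one.
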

\begin{proof}
We can solve directly for $\theta_{2,n}$ in terms of other quantities:
\begin{align}
\theta_{2,n} &= \hat{\theta}_n - \frac{A_n(\hat{\theta}_n)^{-1} B_n(\hat{\theta}_n)^{1/2} z_{1-\alpha}}{\sqrt{n}}.
\end{align}
We know from \citep{white1982maximum} that $\hat{\theta}_n\rightarrow_{a.s.}\theta^{\ast}$.  By Lemma 4 of \citep{amemiya1973regression}, this means that $B_n(\hat{\theta}_n)\rightarrow_{a.s.}B(\theta^{\ast})$ and $A_n(\hat{\theta}_n)\rightarrow_{a.s.}A(\theta^{\ast})$.  By Mann-Wald this implies that $B_n(\hat{\theta}_n)^{1/2}\rightarrow_{a.s.}B(\theta^{\ast})^{1/2}$ and $A_n(\hat{\theta}_n)^{-1}\rightarrow_{a.s.}A(\theta^{\ast})^{-1}$.  $z_{1-\alpha}$ is a constant.  And $\sqrt{n}$ grows arbitrarily large with increasing $n$.  Thus, we have $\theta_{2,n} \rightarrow_{a.s.} \theta^{\ast} - 0 = \theta^{\ast}$.  
\end{proof}

\begin{mylemma}
For all $\epsilon > 0$ there exists $N$, $A_{max}$, $A_{min}$, $B_{max}$, and $B_{min}$ such that for all $n > N$, 
\begin{align}
A_{min} - \epsilon &\leq A_n(\bar{\theta}_n)^{-1} \leq A_{max} + \epsilon \\
B_{min} -\epsilon &\leq B_n(\theta_{1,n})^{1/2} \leq B_{max} + \epsilon.
\end{align}
\end{mylemma}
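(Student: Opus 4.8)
The plan is to read this as a statement about eventual \emph{uniform} boundedness rather than about convergence at $\theta^{\ast}$. Both quantities in the conclusion have the form ``empirical average evaluated at a data-dependent point, then transformed,'' with arguments $\bar{\theta}_n = (1-b_n)\hat{\theta}_n + b_n \theta_{1,n}$ and $\theta_{1,n}$ that are convex combinations of $\hat{\theta}_n$ and $\theta_{1,n}$ and hence remain in a fixed compact region $K \subseteq \Theta$. I would deliberately \emph{not} invoke $\theta_{1,n} \rightarrow_{a.s.} \theta^{\ast}$, since that convergence is established downstream using exactly these bounds and appealing to it here would be circular. Instead I would control the two transformed averages uniformly over $K$.

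The first step is a uniform strong law. Under the assumptions of \citet{white1982maximum}, which include compactness together with the dominating-function conditions, $\sup_{\theta \in K} |A_n(\theta) - A(\theta)| \rightarrow_{a.s.} 0$ and $\sup_{\theta \in K} |B_n(\theta) - B(\theta)| \rightarrow_{a.s.} 0$. Since $A(\cdot)$ and $B(\cdot)$ are continuous on the compact set $K$, they attain finite extrema there; I would write $a_{min} = \min_{\theta \in K} A(\theta)$, $a_{max} = \max_{\theta \in K} A(\theta)$, and $b_{max} = \max_{\theta \in K} B(\theta)$, so that $0 \leq B(\theta) \leq b_{max}$ and $a_{min} \leq A(\theta) \leq a_{max}$ on $K$.

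The second step assembles the bounds. For the $B$ term only an upper bound is substantive: the uniform law gives $B_n(\theta_{1,n}) \leq b_{max} + \delta$ for all large $n$, whence $B_n(\theta_{1,n})^{1/2} \leq (b_{max}+\delta)^{1/2}$, while the lower bound is the trivial nonnegativity of the square root. Taking $B_{max} = b_{max}^{1/2}$ and $B_{min} = 0$ and choosing $N$ so that $\delta$ is absorbed into $\epsilon$ yields the second inequality. For the $A$ term, the uniform law gives $a_{min} - \delta \leq A_n(\bar{\theta}_n) \leq a_{max} + \delta$; inverting converts this into a two-sided bound on $A_n(\bar{\theta}_n)^{-1}$, and setting $A_{min} = a_{max}^{-1}$, $A_{max} = a_{min}^{-1}$, with $N$ chosen so that $\delta$ is absorbed into $\epsilon$, yields the first inequality.

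The main obstacle is the inversion in the $A$ term: it requires $A_n(\bar{\theta}_n)$ to be bounded \emph{away from zero} uniformly, for an argument $\bar{\theta}_n$ that cannot yet be localized near $\theta^{\ast}$. A positive pointwise value $A(\theta^{\ast}) > 0$ (Lemma 2) therefore does not by itself suffice; what is needed is that $a_{min} = \min_{\theta \in K} A(\theta) > 0$, i.e. that $A$ is bounded below by a strictly positive constant over the whole compact region $K$ in which the estimator and the endpoint sequences live. This is precisely the positivity of $A$ on the ``required subset'' flagged at the start of the appendix, and combined with the uniform strong law it promotes $A_n$ to being bounded away from zero for all large $n$, legitimizing the inverse. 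The $B$ side avoids this difficulty entirely because only its upper bound is required.
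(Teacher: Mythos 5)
Your proposal is correct, and its skeleton matches the paper's: both refuse to invoke $\theta_{1,n} \rightarrow_{a.s.} \theta^{\ast}$ (which is proved \emph{later} from these very bounds), both work uniformly over a fixed compact set containing $\hat{\theta}_n$, $\theta_{1,n}$, and the segment between them, and both obtain the bounding constants as extrema of continuous limit functions on that compact set via a uniform law of large numbers. The technical route differs in one respect worth recording. The paper assumes outright that $A_n(\theta)^{-1}$, $A(\theta)^{-1}$, $B_n(\theta)^{1/2}$, $B(\theta)^{1/2}$ are continuous on a convex compact enlargement $\Theta_{convex}$ of $\Theta$, applies the uniform law of large numbers to these \emph{transformed} functions directly, and chains $A_n(\bar{\theta}_n)^{-1} \leq A_n(\theta_{max,A_n^{-1}})^{-1} \leq A(\theta_{max,A_n^{-1}})^{-1} + \epsilon \leq A(\theta_{max,A^{-1}})^{-1} + \epsilon$. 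You instead apply the uniform law only to $A_n$ and $B_n$ --- which is what Jennrich's Theorem 2 and White's conditions deliver most directly --- and then transform the resulting two-sided bounds through the continuous maps $x \mapsto x^{-1}$ and $x \mapsto x^{1/2}$. The price is exactly the obstacle you flag: you must require $\min_{\theta \in K} A(\theta) > 0$, since pointwise positivity at $\theta^{\ast}$ (the paper's Lemma 2) does not license a uniform inversion. The paper's version of this same requirement is hidden inside its continuity assumption on $A_n(\theta)^{-1}$ and $A(\theta)^{-1}$ (continuity of the inverse on a compact set forces $A$ to be bounded away from zero there), so your treatment is the more explicit accounting of what must actually be assumed; neither argument derives it from Lemmas 1--2. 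One small correction: your compact region cannot in general be taken inside $\Theta$. The convex combination $\bar{\theta}_n = (1-b_n)\hat{\theta}_n + b_n \theta_{1,n}$ can exit a non-convex $\Theta$, which is precisely why the paper encloses $\Theta$ in a possibly larger convex, still compact $\Theta_{convex}$; your $K$ should be such a convex hull, with the uniform law and the positivity of $A$ posited on it rather than on $\Theta$.
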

\begin{proof}
At this point, we remind the reader of some of the standard regularity assumptions generally made in this subject.  Specifically, we will assume that the $\theta$-space, $\Theta$, is a compact subset of Euclidean space.  We will also rely on Assumption A5 in \citep{white1982maximum} to ensure the continuity of $B(\theta)$ and $A(\theta)$.  Furthermore, we will be able to apply a uniform law of large numbers to $B_n(\theta)$ and $A_n(\theta)$.  A full list of standard regularity assumptions can be found in \citep{white1982maximum}.

Additionally, we need to be working in a convex set.  Technically, $\Theta$ is only a compact set in Euclidean space.  But, being closed and bounded, we can enclose $\Theta$ within a possibly larger, convex, still compact set $\Theta_{convex}$.  We will work within $\Theta_{convex}$ to set our bounds.  We also need to assume that $B_n(\theta)^{1/2}$, $B(\theta)^{1/2}$, $A_n(\theta)^{-1}$ and $A(\theta)^{-1}$ are continuous functions on $\Theta_{convex}$.  

Now, since $B_n(\theta)^{1/2}$, $B(\theta)^{1/2}$, $A_n(\theta)^{-1}$ and $A(\theta)^{-1}$ are continuous functions on $\theta \in \Theta_{convex}$ and $\Theta_{convex}$ is compact, those functions each map to compact subsets of $\mathbb{R}$, thus they attain their maximum and minimum on the image sets.  That is, there are values $\theta_{min,B^{1/2}}$ and $\theta_{max,B^{1/2}}$ such that $0 < B(\theta_{min,B^{1/2}})^{1/2} \leq B(\theta)^{1/2} \leq B(\theta_{max,B^{1/2}})^{1/2} < \infty, \, \forall \theta \in \Theta_{convex}$.  We can define $\theta_{min,A^{-1}}$, $\theta_{max,A^{-1}}$, $\theta_{min,A_n^{-1}}$, and $\theta_{max,A_n^{-1}}$, $\theta_{min,B_n^{1/2}}$, and $\theta_{max,B_n^{1/2}}$ similarly.

We will proceed to show the details for the bound on $A_n(\bar{\theta}_n)^{-1}$ knowing that the argument is similar for $B_n(\theta_{1,n})^{1/2}$.  From the convexity of $\Theta_{convex}$, we know that $\bar{\theta}_n = (1-b)\hat{\theta}_n + b \theta_{1,n} \in \Theta_{convex}$ for some $b \in [0,1]$, and hence $A_n((1-b)\hat{\theta}_n + b \theta_{1,n})^{-1} \leq A_n(\theta_{max,A_n^{-1}})^{-1}$.  Since we can apply the uniform law of large numbers to $A_n(\theta)^{-1}$, we have that 
\begin{align}
\sup_{\theta \in \Theta_{convex}} | A_n(\theta)^{-1} - A(\theta)^{-1} | &\rightarrow_{a.s.} 0.
\end{align}
What this means is that with probability one, $\forall \epsilon > 0$, $\exists N_{\epsilon}$ such that $\forall n \geq N_{\epsilon}$, $\forall \theta \in \Theta_{convex}$, $|A_n(\theta) - A(\theta)| < \epsilon$.  So, if we fix an $\epsilon>0$, there is some $N_{\epsilon}$ such that for all $n\geq N_{\epsilon}$ we have
\begin{align}
A_n((1-b)\hat{\theta}_n + b \theta_{1,n})^{-1} &\leq A_n(\theta_{max,A_n^{-1}})^{-1} \label{defnofmax}\\
 &\leq A(\theta_{max,A_n^{-1}})^{-1} + \epsilon \label{uniflln}\\
 &\leq A(\theta_{max,A^{-1}})^{-1} + \epsilon. \label{defnAmax}
\end{align}
Line (\ref{defnofmax}) comes from the definition of $\theta_{max,A_n^{-1}}$.  Line (\ref{uniflln}) is from the uniform law of large numbers.  And line (\ref{defnAmax}) is from the definition of $\theta_{max,A^{-1}}$.  

Now, we can say that for a fixed $\epsilon > 0$, there exists an $N_{\epsilon, B^{1/2}}$ for $B_n(\theta)^{1/2}$ and an $N_{\epsilon, A_n^{-1}}$ for $A(\theta)^{-1}$ such that for $n \geq max(N_{\epsilon, B^{1/2}}, N_{\epsilon, A_n^{-1}})$
\begin{align}
A_n((1-b)\hat{\theta}_n + b \theta_{1,n})^{-1} &\leq A(\theta_{max,A^{-1}})^{-1} + \epsilon \\
B_n(\theta_{1,n})^{1/2} &\leq B(\theta_{max,B^{1/2}})^{1/2} + \epsilon.
\end{align}
By the same style of argument for $n$ sufficiently large, we have additionally that
\begin{align}
A_n((1-b)\hat{\theta}_n + b \theta_{1,n})^{-1} &\geq A(\theta_{min,A^{-1}})^{-1} - \epsilon \\
B_n(\theta_{1,n})^{1/2} &\geq B(\theta_{min,B^{1/2}})^{1/2} - \epsilon.
\end{align}
If we let $A_{max} = A(\theta_{max,A^{-1}})^{-1}$, $B_{max} = B(\theta_{max,B^{1/2}})^{1/2}$, $A_{min} = A(\theta_{min,A^{-1}})^{-1}$, $B_{min} = B(\theta_{min,B^{1/2}})^{1/2}$ then the result is proved.
\end{proof}

\begin{mylemma}
$\theta_{1,n} \rightarrow_{a.s.} \theta^{\ast}$
\end{mylemma}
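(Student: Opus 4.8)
The plan is to leverage the implicit expression for $\theta_{1,n}$ in Equation (\ref{pivplug}) together with the uniform bounds supplied by the previous lemma. Rearranging (\ref{pivplug}) isolates the distance between $\theta_{1,n}$ and the maximum misspecified likelihood estimator,
\begin{align}
\theta_{1,n} - \hat{\theta}_n &= - \frac{A_n((1-b_n)\hat{\theta}_n + b_n \theta_{1,n})^{-1} B_n(\theta_{1,n})^{1/2} z_{1-\alpha}}{\sqrt{n}},
\end{align}
so that the entire gap is the magnitude of the numerator scaled by $1/\sqrt{n}$. The strategy is then to show the numerator is bounded almost surely while the denominator diverges.

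First I would take absolute values and apply the previous lemma. That lemma guarantees that, for any fixed $\epsilon > 0$ and all $n$ beyond some threshold $N$, both $A_n((1-b_n)\hat{\theta}_n + b_n \theta_{1,n})^{-1}$ and $B_n(\theta_{1,n})^{1/2}$ lie in fixed finite intervals almost surely. Since $z_{1-\alpha}$ is a constant, the numerator is bounded above by some finite constant $C$ for all $n > N$, which yields
\begin{align}
| \theta_{1,n} - \hat{\theta}_n | &\leq \frac{C}{\sqrt{n}} \rightarrow_{a.s.} 0.
\end{align}

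I would then close the argument using the triangle inequality and the consistency of $\hat{\theta}_n$. Since $\hat{\theta}_n \rightarrow_{a.s.} \theta^{\ast}$ by \citep{white1982maximum}, we obtain
\begin{align}
| \theta_{1,n} - \theta^{\ast} | &\leq | \theta_{1,n} - \hat{\theta}_n | + | \hat{\theta}_n - \theta^{\ast} | \rightarrow_{a.s.} 0,
\end{align}
establishing $\theta_{1,n} \rightarrow_{a.s.} \theta^{\ast}$.

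The main obstacle is a potential circularity: the bounds that keep the numerator finite involve arguments that themselves depend on $\theta_{1,n}$, so one cannot simply invoke pointwise convergence at $\theta_{1,n}$ without already knowing the limit. The previous lemma is precisely what dissolves this difficulty, since its bounds are derived by taking suprema and infima of the continuous functions $A_n(\cdot)^{-1}$ and $B_n(\cdot)^{1/2}$ over the compact set $\Theta_{convex}$ via the uniform law of large numbers; these bounds hold uniformly, irrespective of where $\theta_{1,n}$ happens to lie. With that uniform control secured, the $\sqrt{n}$ in the denominator does the remaining work, and no fixed-point or implicit-function argument is required.
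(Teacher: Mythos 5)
Your proposal is correct and takes essentially the same route as the paper's own proof: both rearrange the implicit relation (\ref{pivplug}) to isolate $\hat{\theta}_n - \theta_{1,n}$, invoke the preceding lemma's uniform bounds on $A_n(\cdot)^{-1}$ and $B_n(\cdot)^{1/2}$ so that the gap is a bounded quantity times $z_{1-\alpha}/\sqrt{n}$, and then combine the resulting convergence $|\theta_{1,n} - \hat{\theta}_n| \rightarrow_{a.s.} 0$ with the consistency $\hat{\theta}_n \rightarrow_{a.s.} \theta^{\ast}$ from \citet{white1982maximum}. The paper phrases the final step as a two-sided squeeze rather than an absolute-value bound plus triangle inequality, but this is a cosmetic difference, and your remark about why the uniform (rather than pointwise) bounds dissolve the apparent circularity in $\theta_{1,n}$ is exactly the role that lemma plays in the paper.
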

\begin{proof}
$\theta_{1,n}$ solves the following equation
\begin{align}
\sqrt{n} B_n(\theta_{1,n})^{-1/2} \frac{1}{n}\sum_{i=1}^n l_{\theta}(y_i; \theta_{1,n}) &= z_{1-\alpha}.
\end{align}
Going back to Jennrich's Lemma, we see that the result still holds replacing $\theta^{\ast}$ with $\theta_{1,n}$.  Thus, we can investigate
\begin{align} \label{pivottheta1n}
\sqrt{n} B_n(\theta_{1,n})^{-1/2} A_n(\bar{\theta}) (\hat{\theta} - \theta_{1,n}) &= z_{1-\alpha}
\end{align}
instead.

Since $\bar{\theta}$ lies on the line segment containing $\hat{\theta}$ and $\theta_{1,n}$, we can write $\bar{\theta}=(1-b)\hat{\theta} + b \theta_{1,n}$ for some $b \in [0,1]$.  This means our equation can be re-written as
\begin{align}\label{solveforthetaone}
\sqrt{n} B_n(\theta_{1,n})^{-1/2} A_n((1-b)\hat{\theta} + b \theta_{1,n}) (\hat{\theta} - \theta_{1,n}) &= z_{1-\alpha}, \qquad b \in [0,1].
\end{align}
Clearly, we cannot solve this equation directly for $\theta_{1,n}$ in the general case.  Instead we perform a linear search to find the value of $\theta_{1,n}$ that satisfies Equation~\eqref{solveforthetaone}.  

We do some minor algebra to the previous equation 
\begin{align}
\hat{\theta}_n - \theta_{1,n} &= \frac{z_{1-\alpha}}{\sqrt{n}} B_n(\theta_{1,n})^{1/2} A_n((1-b)\hat{\theta}_n + b \theta_{1,n})^{-1}
\end{align}

This means that for $n$ sufficiently large, we have 
\begin{align}
-\infty &< \frac{z_{1-\alpha}}{\sqrt{n}} (B(\theta_{min,B^{1/2}})^{1/2}-\epsilon) (A(\theta_{min,A^{-1}})^{-1} - \epsilon) \\
 &\leq \hat{\theta}_n - \theta_{1,n} \leq \frac{z_{1-\alpha}}{\sqrt{n}} (B(\theta_{max,B^{1/2}})^{1/2}+\epsilon) (A(\theta_{max,A^{-1}})^{-1} + \epsilon) < \infty.
\end{align}
But the products $z_{1-\alpha} (B(\theta_{min,B^{1/2}})^{1/2}-\epsilon) (A(\theta_{min,A^{-1}})^{-1} - \epsilon)$, and $z_{1-\alpha} (B(\theta_{max,B^{1/2}})^{1/2}+\epsilon) (A(\theta_{max,A^{-1}})^{-1} + \epsilon)$ are constants, so the middle term converges to zero as $n\rightarrow \infty$ by the Squeeze Theorem.  And since $\hat{\theta}_n \rightarrow_{a.s.} \theta^{\ast}$, we have that $\theta_{1,n} \rightarrow_{a.s.} \theta^{\ast}$ as well.
\end{proof}

\section{Multivariate Results}

\begin{mylemma}\label{convexsum}
Suppose $\mathbf{A}_1, \ldots, \mathbf{A}_k, \mathbf{A}$ are matrices such that $\| \mathbf{A}_i - \mathbf{A} \| < \epsilon$ for $i = 1,\ldots,k$ and for some $\epsilon > 0$.  Let $\omega_1, \ldots,\omega_k$ be weights such that $\omega_i \geq 0$ for $i = 1,\ldots,k$ and $\sum_{i=1}^k \omega_i = 1$.  Then $\| \sum_{i=1}^k \omega_i \mathbf{A}_i - \mathbf{A} \| < \epsilon$.
\end{mylemma}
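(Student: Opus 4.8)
The plan is to exploit the defining property of a convex combination. Since $\sum_{i=1}^k \omega_i = 1$, the fixed matrix $\mathbf{A}$ can itself be expressed as $\mathbf{A} = \sum_{i=1}^k \omega_i \mathbf{A}$. Substituting this representation into the quantity of interest collapses the difference of the two sums into a single weighted sum of the individual deviations $\mathbf{A}_i - \mathbf{A}$, each of which is bounded by hypothesis. From there the result follows by the triangle inequality and the homogeneity of the norm.

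Concretely, the first step is to rewrite the target quantity as a weighted sum of deviations,
\begin{align*}
\left\| \sum_{i=1}^k \omega_i \mathbf{A}_i - \mathbf{A} \right\| = \left\| \sum_{i=1}^k \omega_i \mathbf{A}_i - \sum_{i=1}^k \omega_i \mathbf{A} \right\| = \left\| \sum_{i=1}^k \omega_i (\mathbf{A}_i - \mathbf{A}) \right\|.
\end{align*}
The second step applies the triangle inequality for the matrix norm, followed by absolute homogeneity; because each $\omega_i \geq 0$ we have $\| \omega_i (\mathbf{A}_i - \mathbf{A}) \| = \omega_i \| \mathbf{A}_i - \mathbf{A} \|$, giving
\begin{align*}
\left\| \sum_{i=1}^k \omega_i (\mathbf{A}_i - \mathbf{A}) \right\| \leq \sum_{i=1}^k \omega_i \| \mathbf{A}_i - \mathbf{A} \|.
\end{align*}
The final step bounds each $\| \mathbf{A}_i - \mathbf{A} \|$ using the hypothesis and invokes $\sum_{i=1}^k \omega_i = 1$ to recover the constant $\epsilon$.

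The only point requiring slight care is the strictness of the final inequality. Since the weights are nonnegative and sum to one, at least one weight, say $\omega_j$, must be strictly positive. For that index $\omega_j \| \mathbf{A}_j - \mathbf{A} \| < \omega_j \epsilon$ holds strictly, while every other term satisfies $\omega_i \| \mathbf{A}_i - \mathbf{A} \| \leq \omega_i \epsilon$. Summing these term-by-term comparisons yields $\sum_{i=1}^k \omega_i \| \mathbf{A}_i - \mathbf{A} \| < \sum_{i=1}^k \omega_i \epsilon = \epsilon$, which is the desired strict bound. I do not anticipate any real obstacle: the statement is essentially the convexity of a norm ball, and the only thing worth tracking is preserving the strict inequality rather than settling for a non-strict $\leq$.
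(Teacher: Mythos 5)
Your proof is correct and takes essentially the same route as the paper's: collapse the difference into $\| \sum_{i=1}^k \omega_i (\mathbf{A}_i - \mathbf{A}) \|$, apply the triangle inequality and nonnegative homogeneity, and sum the termwise bounds. Your closing remark on strictness is a small improvement, since the paper's own chain technically ends with $\leq \sum_{i=1}^k \omega_i \epsilon = \epsilon$ and leaves the strict inequality implicit, whereas you justify it by noting that some $\omega_j$ must be strictly positive.
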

\begin{proof}
We have
\begin{align}
\| \sum_{i=1}^k \omega_i \mathbf{A}_i - \mathbf{A} \| &= \| \sum_{i=1}^k \omega_i (\mathbf{A}_i - \mathbf{A}) \| \\
 &\leq \sum_{i=1}^k \| \omega_i (\mathbf{A}_i - \mathbf{A}) \| \\
 &= \sum_{i=1}^k \omega_i \| (\mathbf{A}_i - \mathbf{A}) \| \\
 &\leq \sum_{i=1}^k \omega_i \epsilon \\
 &= \epsilon
\end{align}
\end{proof}

\begin{mylemma}\label{abaconv}
For every $\epsilon > 0$ there is a ball of radius $\delta$ around $\bm{\theta}^{\ast}$, called $Ball_{\delta}(\bm{\theta}^{\ast})$, such that if $\bm{\theta}_0, \bm{\theta}_1, \ldots, \bm{\theta}_k \in Ball_{\delta}(\bm{\theta}^{\ast})$ and $\mathbf{A}(\bar{\bm{\theta}}) = \sum_{i=1}^k \omega_i \mathbf{A}(\bm{\theta}_i)$ a convex combination, then $\| \mathbf{A}(\bm{\theta}^{\ast}) \mathbf{B}(\bm{\theta}^{\ast}) \mathbf{A}(\bm{\theta}^{\ast}) - \mathbf{A}(\bar{\bm{\theta}}) \mathbf{B}(\bm{\theta}_0) \mathbf{A}(\bar{\bm{\theta}}) \| < \epsilon$.
\end{mylemma}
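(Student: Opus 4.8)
The plan is to reduce the statement to the joint continuity of matrix multiplication together with the continuity of $\mathbf{A}(\cdot)$ and $\mathbf{B}(\cdot)$ at $\bm{\theta}^{\ast}$, handling the convex combination $\mathbf{A}(\bar{\bm{\theta}})$ through Lemma~\ref{convexsum}. First I would invoke the continuity of $\mathbf{A}$ and $\mathbf{B}$ on $\Theta_{convex}$ (Assumption A5 in \citep{white1982maximum}): given a tolerance $\eta > 0$ to be fixed later, choose $\delta > 0$ so small that $\|\bm{\theta} - \bm{\theta}^{\ast}\| < \delta$ forces both $\|\mathbf{A}(\bm{\theta}) - \mathbf{A}(\bm{\theta}^{\ast})\| < \eta$ and $\|\mathbf{B}(\bm{\theta}) - \mathbf{B}(\bm{\theta}^{\ast})\| < \eta$. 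Then for any $\bm{\theta}_0, \ldots, \bm{\theta}_k \in Ball_{\delta}(\bm{\theta}^{\ast})$ every $\mathbf{A}(\bm{\theta}_i)$ lies within $\eta$ of $\mathbf{A}(\bm{\theta}^{\ast})$, so Lemma~\ref{convexsum} gives $\|\mathbf{A}(\bar{\bm{\theta}}) - \mathbf{A}(\bm{\theta}^{\ast})\| < \eta$ for the convex combination, and likewise $\|\mathbf{B}(\bm{\theta}_0) - \mathbf{B}(\bm{\theta}^{\ast})\| < \eta$.

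Next I would telescope the triple-product difference. Writing $\mathbf{A}^{\ast} = \mathbf{A}(\bm{\theta}^{\ast})$, $\mathbf{B}^{\ast} = \mathbf{B}(\bm{\theta}^{\ast})$, $\bar{\mathbf{A}} = \mathbf{A}(\bar{\bm{\theta}})$, and $\mathbf{B}_0 = \mathbf{B}(\bm{\theta}_0)$, I split
\begin{align}
\bar{\mathbf{A}} \mathbf{B}_0 \bar{\mathbf{A}} - \mathbf{A}^{\ast} \mathbf{B}^{\ast} \mathbf{A}^{\ast}
&= (\bar{\mathbf{A}} - \mathbf{A}^{\ast}) \mathbf{B}_0 \bar{\mathbf{A}}
 + \mathbf{A}^{\ast} (\mathbf{B}_0 - \mathbf{B}^{\ast}) \bar{\mathbf{A}}
 + \mathbf{A}^{\ast} \mathbf{B}^{\ast} (\bar{\mathbf{A}} - \mathbf{A}^{\ast}). \nonumber
\end{align}
Applying the triangle inequality and submultiplicativity of the matrix norm bounds the left side by $\|\bar{\mathbf{A}} - \mathbf{A}^{\ast}\| \|\mathbf{B}_0\| \|\bar{\mathbf{A}}\| + \|\mathbf{A}^{\ast}\| \|\mathbf{B}_0 - \mathbf{B}^{\ast}\| \|\bar{\mathbf{A}}\| + \|\mathbf{A}^{\ast}\| \|\mathbf{B}^{\ast}\| \|\bar{\mathbf{A}} - \mathbf{A}^{\ast}\|$, in which every factor except the three small differences is uniformly bounded: restricting to $\eta \leq 1$ gives $\|\bar{\mathbf{A}}\| \leq \|\mathbf{A}^{\ast}\| + 1$ and $\|\mathbf{B}_0\| \leq \|\mathbf{B}^{\ast}\| + 1$, while $\|\mathbf{A}^{\ast}\|$ and $\|\mathbf{B}^{\ast}\|$ are fixed constants.

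Collecting terms, the difference is at most $C \eta$, where $C$ depends only on $\|\mathbf{A}^{\ast}\|$ and $\|\mathbf{B}^{\ast}\|$ and not on the points $\bm{\theta}_i$ or the weights $\omega_i$. It then suffices to have chosen $\eta < \min(1, \epsilon / C)$ at the outset, which determines the radius $\delta$; this yields $\|\mathbf{A}(\bm{\theta}^{\ast}) \mathbf{B}(\bm{\theta}^{\ast}) \mathbf{A}(\bm{\theta}^{\ast}) - \mathbf{A}(\bar{\bm{\theta}}) \mathbf{B}(\bm{\theta}_0) \mathbf{A}(\bar{\bm{\theta}})\| < \epsilon$. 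The only real subtlety is that $\bar{\mathbf{A}}$ is not $\mathbf{A}$ evaluated at a single point, so continuity of $\mathbf{A}$ cannot be applied to it directly; Lemma~\ref{convexsum} is precisely what lets the per-point continuity bounds pass through the convex combination, and once that is in hand the remaining estimate is routine.
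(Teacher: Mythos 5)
Your proof is correct and takes essentially the same route as the paper's: both telescope $\mathbf{A}(\bar{\bm{\theta}})\mathbf{B}(\bm{\theta}_0)\mathbf{A}(\bar{\bm{\theta}}) - \mathbf{A}(\bm{\theta}^{\ast})\mathbf{B}(\bm{\theta}^{\ast})\mathbf{A}(\bm{\theta}^{\ast})$ into the same three terms, control the difference factors by continuity of $\mathbf{A}$ and $\mathbf{B}$ passed through the convex combination via Lemma~\ref{convexsum}, and bound the remaining factors uniformly near $\bm{\theta}^{\ast}$. The only difference is bookkeeping: you use a single tolerance $\eta \leq 1$ solved for at the end, whereas the paper pre-assigns tailored tolerances such as $\epsilon/(3 A^{\ast} A_{bnd})$ using supremum bounds $A_{bnd}$, $B_{bnd}$ over the ball, which is a cosmetic rather than substantive distinction.
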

\begin{proof}
First, we note the constants $A^{\ast} = \| \mathbf{A}(\bm{\theta}^{\ast}) \|$ and $B^{\ast} = \| \mathbf{B}(\bm{\theta}^{\ast}) \|$.  

Next, we use the continuity of $\mathbf{B}(\bm{\theta})$ with respect to $\bm{\theta}$ to see that for all $\epsilon > 0$, there is a $\delta_B > 0$ such that if $\| \bm{\theta} - \bm{\theta}^{\ast} \| < \delta_B$ then $\| \mathbf{B}(\bm{\theta}) - \mathbf{B}(\bm{\theta}^{\ast}) \| < \epsilon / (3 A^{\ast})$.

We now set $B_{bnd} = \underset{\| \bm{\theta} - \bm{\theta}^{\ast} \| < \delta_B}{\sup} \| \mathbf{B}(\bm{\theta}) \|$ and similarly set $A_{bnd} = \underset{\| \bm{\theta} - \bm{\theta}^{\ast} \| < \delta_B}{\sup} \| \mathbf{A}(\bm{\theta}) \|$.  

This implies that for all $\epsilon > 0$, there is a $\delta_{B'} > 0$ such that if $\| \bm{\theta} - \bm{\theta}^{\ast} \| < \delta_{B'}$ then $\| \mathbf{B}(\bm{\theta}) - \mathbf{B}(\bm{\theta}^{\ast}) \| < \epsilon / (3 A^{\ast} A_{bnd})$

Now use the continuity of $\mathbf{A}(\bm{\theta})$ with respect to $\bm{\theta}$ to see that for all $\epsilon > 0$, there is a $\delta_A > 0$ such that if $\| \bm{\theta} - \bm{\theta}^{\ast} \| < \delta_A$ then $\| \mathbf{A}(\bm{\theta}) - \mathbf{A}(\bm{\theta}^{\ast}) \| < \epsilon / (3*\max(A^{\ast} \cdot B^{\ast}, A_{bnd} \cdot B_{bnd}))$.

Pick $\epsilon > 0$, and let $\delta = \min(\delta_A, \delta_B, \delta_{B'})$.  This means that $B_{bnd} \geq \underset{\| \bm{\theta} - \bm{\theta}^{\ast} \| < \delta}{\sup} \| \mathbf{B}(\bm{\theta}) \|$ and likewise $A_{bnd} \geq \underset{\| \bm{\theta} - \bm{\theta}^{\ast} \| < \delta}{\sup} \| \mathbf{A}(\bm{\theta}) \|$.

Consider $Ball_{\delta}(\bm{\theta}^{\ast}) = \{ \bm{\theta} \mid \| \bm{\theta} - \bm{\theta}^{\ast} \| < \delta \}$.  Now select any $\bm{\theta}_0, \bm{\theta}_1, \ldots, \bm{\theta}_k \in Ball_{\delta}(\bm{\theta}^{\ast})$.  And let $\mathbf{A}(\bar{\bm{\theta}}) = \sum_{i=1}^k \omega_i \mathbf{A}(\bm{\theta}_i)$ a convex combination.  Then we have

\begin{align}
& \| \mathbf{A}(\bm{\theta}^{\ast}) \mathbf{B}(\bm{\theta}^{\ast}) \mathbf{A}(\bm{\theta}^{\ast}) - \mathbf{A}(\bar{\bm{\theta}}) \mathbf{B}(\bm{\theta}_0) \mathbf{A}(\bar{\bm{\theta}}) \| \\
&= \| \mathbf{A}(\bm{\theta}^{\ast}) \mathbf{B}(\bm{\theta}^{\ast}) \mathbf{A}(\bm{\theta}^{\ast}) - \mathbf{A}(\bm{\theta}^{\ast}) \mathbf{B}(\bm{\theta}^{\ast}) \mathbf{A}(\bar{\bm{\theta}}) + \mathbf{A}(\bm{\theta}^{\ast}) \mathbf{B}(\bm{\theta}^{\ast}) \mathbf{A}(\bar{\bm{\theta}}) - \mathbf{A}(\bar{\bm{\theta}}) \mathbf{B}(\bm{\theta}_0) \mathbf{A}(\bar{\bm{\theta}}) \| \\
&\leq \| \mathbf{A}(\bm{\theta}^{\ast}) \mathbf{B}(\bm{\theta}^{\ast}) \mathbf{A}(\bm{\theta}^{\ast}) - \mathbf{A}(\bm{\theta}^{\ast}) \mathbf{B}(\bm{\theta}^{\ast}) \mathbf{A}(\bar{\bm{\theta}}) \| + \| \mathbf{A}(\bm{\theta}^{\ast}) \mathbf{B}(\bm{\theta}^{\ast}) \mathbf{A}(\bar{\bm{\theta}}) - \mathbf{A}(\bar{\bm{\theta}}) \mathbf{B}(\bm{\theta}_0) \mathbf{A}(\bar{\bm{\theta}}) \| \\
&\leq \| \mathbf{A}(\bm{\theta}^{\ast}) \| \| \mathbf{B}(\bm{\theta}^{\ast}) \| \| \mathbf{A}(\bm{\theta}^{\ast}) - \mathbf{A}(\bar{\bm{\theta}}) \| + \| \mathbf{A}(\bm{\theta}^{\ast}) \mathbf{B}(\bm{\theta}^{\ast}) - \mathbf{A}(\bar{\bm{\theta}}) \mathbf{B}(\bm{\theta}_0) \| \| \mathbf{A}(\bar{\bm{\theta}}) \| \\
 &\leq A^{\ast} B^{\ast} \frac{\epsilon}{3*\max(A^{\ast} B^{\ast}, A_{bnd} B_{bnd})} \\
&\; + \| \mathbf{A}(\bm{\theta}^{\ast}) \mathbf{B}(\bm{\theta}^{\ast}) - \mathbf{A}(\bm{\theta}^{\ast}) \mathbf{B}(\bm{\theta}_0) + \mathbf{A}(\bm{\theta}^{\ast}) \mathbf{B}(\bm{\theta}_0) - \mathbf{A}(\bar{\bm{\theta}}) \mathbf{B}(\bm{\theta}_0) \| \| \mathbf{A}(\bar{\bm{\theta}}) \| \\
 &\leq \frac{\epsilon}{3} + \| \mathbf{A}(\bm{\theta}^{\ast}) \mathbf{B}(\bm{\theta}^{\ast}) - \mathbf{A}(\bm{\theta}^{\ast}) \mathbf{B}(\bm{\theta}_0) \| \| \mathbf{A}(\bar{\bm{\theta}}) \| + \| \mathbf{A}(\bm{\theta}^{\ast}) \mathbf{B}(\bm{\theta}_0) - \mathbf{A}(\bar{\bm{\theta}}) \mathbf{B}(\bm{\theta}_0) \| \| \mathbf{A}(\bar{\bm{\theta}}) \| \\
 &\leq \frac{\epsilon}{3} + \| \mathbf{A}(\bm{\theta}^{\ast}) \| \| \mathbf{B}(\bm{\theta}^{\ast}) - \mathbf{B}(\bm{\theta}_0) \| \| \mathbf{A}(\bar{\bm{\theta}}) \| + \| \mathbf{A}(\bm{\theta}^{\ast}) - \mathbf{A}(\bar{\bm{\theta}}) \| \| \mathbf{B}(\bm-{\theta}_0) \| \| \mathbf{A}(\bar{\bm{\theta}}) \|  \\
 &\leq \frac{\epsilon}{3} + A^{\ast} \frac{\epsilon}{3A^{\ast} A_{bnd}} A_{bnd} + \frac{\epsilon}{3*\max(A^{\ast} B^{\ast}, A_{bnd} B_{bnd})} B_{bnd} A_{bnd} \\
 &\leq \frac{\epsilon}{3} + \frac{\epsilon}{3} + \frac{\epsilon}{3} \\
 &= \epsilon
\end{align}
and the proof is done.  
\end{proof}

\begin{mylemma}\label{anbnconv}
With probability one, for all $\epsilon > 0$ there exists a $\delta > 0$ and an $N$ such that for all $n > N$ and all $\bm{\theta}_0, \bm{\theta}_1, \ldots, \bm{\theta}_k$ with $\| \bm{\theta}_i - \bm{\theta}^{\ast} \| < \delta$ for $i = 0,1,\ldots,k$, $\| \mathbf{A}_n(\bar{\bm{\theta}}) \mathbf{B}_n(\bm{\theta}_0) \mathbf{A}_n(\bar{\bm{\theta}}) - \mathbf{A}(\bm{\theta}^{\ast}) \mathbf{B}(\bm{\theta}^{\ast}) \mathbf{A}(\bm{\theta}^{\ast}) \|< \epsilon$ for $\mathbf{A}_n(\bar{\bm{\theta}})$ defined above.
\end{mylemma}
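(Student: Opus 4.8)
The plan is to reduce the empirical statement to its population analogue, Lemma~\ref{abaconv}, by inserting the population triple product evaluated at the same points as an intermediate term. Writing $\mathbf{A}_n(\bar{\bm{\theta}}) = \sum_{i=1}^k \omega_i \mathbf{A}_n(\bm{\theta}_i)$ for the empirical convex combination and $\mathbf{A}(\bar{\bm{\theta}}) = \sum_{i=1}^k \omega_i \mathbf{A}(\bm{\theta}_i)$ for its population counterpart, the triangle inequality gives
\[
\| \mathbf{A}_n(\bar{\bm{\theta}}) \mathbf{B}_n(\bm{\theta}_0) \mathbf{A}_n(\bar{\bm{\theta}}) - \mathbf{A}(\bm{\theta}^{\ast}) \mathbf{B}(\bm{\theta}^{\ast}) \mathbf{A}(\bm{\theta}^{\ast}) \| \leq \mathrm{(I)} + \mathrm{(II)},
\]
where $\mathrm{(I)} = \| \mathbf{A}_n(\bar{\bm{\theta}}) \mathbf{B}_n(\bm{\theta}_0) \mathbf{A}_n(\bar{\bm{\theta}}) - \mathbf{A}(\bar{\bm{\theta}}) \mathbf{B}(\bm{\theta}_0) \mathbf{A}(\bar{\bm{\theta}}) \|$ is an empirical-versus-population gap at common arguments, and $\mathrm{(II)} = \| \mathbf{A}(\bar{\bm{\theta}}) \mathbf{B}(\bm{\theta}_0) \mathbf{A}(\bar{\bm{\theta}}) - \mathbf{A}(\bm{\theta}^{\ast}) \mathbf{B}(\bm{\theta}^{\ast}) \mathbf{A}(\bm{\theta}^{\ast}) \|$ is exactly the quantity bounded in Lemma~\ref{abaconv}. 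I would fix $\epsilon > 0$, apply Lemma~\ref{abaconv} with tolerance $\epsilon/2$ to obtain a radius $\delta$ so that $\mathrm{(II)} < \epsilon/2$ whenever $\bm{\theta}_0, \ldots, \bm{\theta}_k \in Ball_{\delta}(\bm{\theta}^{\ast})$, and take this (deterministic) $\delta$ as the one claimed in the lemma.

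For term $\mathrm{(I)}$ I would invoke the uniform strong law of large numbers assumed in the appendix, which gives $\sup_{\bm{\theta} \in \Theta_{convex}} \| \mathbf{A}_n(\bm{\theta}) - \mathbf{A}(\bm{\theta}) \| \rightarrow_{a.s.} 0$ together with the analogous statement for $\mathbf{B}_n$. On the probability-one event where both suprema vanish, choose $N$ so that for all $n > N$ each supremum is below a tolerance $\eta$ to be fixed later. Then each $\| \mathbf{A}_n(\bm{\theta}_i) - \mathbf{A}(\bm{\theta}_i) \| < \eta$, so applying Lemma~\ref{convexsum} to the difference matrices $\mathbf{A}_n(\bm{\theta}_i) - \mathbf{A}(\bm{\theta}_i)$ with the zero matrix as common target yields $\| \mathbf{A}_n(\bar{\bm{\theta}}) - \mathbf{A}(\bar{\bm{\theta}}) \| < \eta$ for the convex combination, while $\| \mathbf{B}_n(\bm{\theta}_0) - \mathbf{B}(\bm{\theta}_0) \| < \eta$ is immediate. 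I would then control $\mathrm{(I)}$ by the same telescoping-and-factor expansion used inside Lemma~\ref{abaconv}: write $\mathbf{A}_n \mathbf{B}_n \mathbf{A}_n - \mathbf{A} \mathbf{B} \mathbf{A}$ as a sum of three terms, each replacing one factor at a time, and bound each by a product of norms times one of the small differences just obtained.

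The norm factors in that expansion stay bounded: the population maps $\mathbf{A}(\cdot)$ and $\mathbf{B}(\cdot)$ are continuous on the compact closure $\overline{Ball_{\delta}(\bm{\theta}^{\ast})}$, hence bounded there, so $\| \mathbf{A}(\bar{\bm{\theta}}) \|$, being a convex combination of the $\mathbf{A}(\bm{\theta}_i)$, is bounded by $\sup_{\bm{\theta} \in \overline{Ball_{\delta}(\bm{\theta}^{\ast})}} \| \mathbf{A}(\bm{\theta}) \|$, and likewise $\| \mathbf{B}(\bm{\theta}_0) \|$ lies below a fixed constant. The uniform law of large numbers then keeps $\| \mathbf{A}_n(\bar{\bm{\theta}}) \|$ and $\| \mathbf{B}_n(\bm{\theta}_0) \|$ within $\eta$ of those constants for $n > N$. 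With these uniform bounds in hand, choosing $\eta$ small enough relative to $\epsilon$ and to the fixed norm constants forces $\mathrm{(I)} < \epsilon/2$ for all $n > N$, simultaneously over every admissible $\bm{\theta}_0, \ldots, \bm{\theta}_k$ and every choice of convex weights.

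The step I expect to be the main obstacle is securing this uniformity: the estimate for $\mathrm{(I)}$ must hold for all $\bm{\theta}_i$ in the ball and for every weight vector defining $\bar{\bm{\theta}}$, not merely at a single fixed point. This is precisely why the pointwise strong law does not suffice and the uniform law of large numbers is required, and why Lemma~\ref{convexsum} is needed to pass from per-coordinate control to control of the arbitrary convex combination $\mathbf{A}_n(\bar{\bm{\theta}})$. Once $\eta$ is chosen after the norm constants are fixed, combining $\mathrm{(I)} < \epsilon/2$ and $\mathrm{(II)} < \epsilon/2$ through the triangle inequality delivers the stated bound, with $\delta$ inherited from Lemma~\ref{abaconv} and the entire argument holding on the probability-one event of the uniform law.
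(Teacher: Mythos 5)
Your proposal is correct, and its skeleton matches the paper's proof: both split $\| \mathbf{A}_n(\bar{\bm{\theta}}) \mathbf{B}_n(\bm{\theta}_0) \mathbf{A}_n(\bar{\bm{\theta}}) - \mathbf{A}(\bm{\theta}^{\ast}) \mathbf{B}(\bm{\theta}^{\ast}) \mathbf{A}(\bm{\theta}^{\ast}) \|$ by the triangle inequality into an empirical-versus-population gap at common arguments (your term (I), controlled with probability one by choosing $N$) and a purely population gap (your term (II), controlled by Lemma~\ref{abaconv} at tolerance $\epsilon/2$, which supplies the $\delta$). Where you genuinely diverge is in dispatching term (I). The paper does this in one line by citing Theorem 2 of \citet{jennrich1969asymptotic} to assert uniform almost sure convergence of the entire triple product $\mathbf{A}_n(\bar{\bm{\theta}}) \mathbf{B}_n(\bm{\theta}_0) \mathbf{A}_n(\bar{\bm{\theta}})$ to $\mathbf{A}(\bar{\bm{\theta}}) \mathbf{B}(\bm{\theta}_0) \mathbf{A}(\bar{\bm{\theta}})$; strictly speaking, that theorem is a uniform strong law for averages of the form $\tfrac{1}{n}\sum_i g(y_i,\bm{\theta})$, and the triple product (with its convex combination of evaluation points) is not literally such an average, so the citation glosses over exactly the step you make explicit. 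You instead apply the uniform law only to the factors $\mathbf{A}_n$ and $\mathbf{B}_n$, pass from per-point control to control of the convex combination via Lemma~\ref{convexsum} (with the zero matrix as the common target), and then close with the same telescoping one-factor-at-a-time expansion used inside Lemma~\ref{abaconv}, with norm constants supplied by continuity and compactness. Your route is longer but self-contained, uniform in the weights $\omega_i$ by construction, and it repairs the slight overreach in the paper's citation; the paper's route is shorter at the cost of leaning on a result that does not directly cover the composite object. Your ordering of choices ($\delta$ from Lemma~\ref{abaconv} first, then the norm constants on the closed ball, then $\eta$, then $N$) avoids circularity, which is the one place such an argument could silently fail.
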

\begin{proof}
We have by \citep{jennrich1969asymptotic}, Theorem 2, that the almost sure convergence of $\mathbf{A}_n(\bar{\bm{\theta}}) \mathbf{B}_n(\bm{\theta}_0) \mathbf{A}_n(\bar{\bm{\theta}})$ to $\mathbf{A}(\bar{\bm{\theta}}) \mathbf{B}(\bm{\theta}_0) \mathbf{A}(\bar{\bm{\theta}})$ is uniform in $\bm{\theta}$.  

Consequently, pick $\epsilon > 0$.  With probability one, there is an $N$ such that for all $n>N$, and for all $\bm{\theta} \in \bm{\Theta}$, $\| \mathbf{A}_n(\bar{\bm{\theta}}) \mathbf{B}_n(\bm{\theta}_0) \mathbf{A}_n(\bar{\bm{\theta}}) - \mathbf{A}(\bar{\bm{\theta}}) \mathbf{B}(\bm{\theta}_0) \mathbf{A}(\bar{\bm{\theta}}) \| < \epsilon/2$ with $\mathbf{A}_n(\bar{\bm{\theta}})$ as defined above.

For that same $\epsilon > 0$ by Lemma~\ref{abaconv} there is a $\delta > 0$ such that for all $\bm{\theta}_0, \bm{\theta}_1, \ldots, \bm{\theta}_k$ with $\| \bm{\theta}_i - \bm{\theta}^{\ast} \| < \delta$ for $i = 0,1,\ldots,k$, $\| \mathbf{A}(\bar{\bm{\theta}}) \mathbf{B}(\bm{\theta}_0) \mathbf{A}(\bar{\bm{\theta}}) - \mathbf{A}(\bm{\theta}^{\ast}) \mathbf{B}(\bm{\theta}^{\ast}) \mathbf{A}(\bm{\theta}^{\ast}) \|< \epsilon/2$

Combining the two inequalities, we see that with probability one, for any $\epsilon > 0$ there is both $\delta > 0$ and $N$ such that for all $n>N$ and for all $\bm{\theta}_0, \bm{\theta}_1, \ldots, \bm{\theta}_k$ with $\| \bm{\theta}_i - \bm{\theta}^{\ast} \| < \delta$ for $i = 0,1,\ldots,k$, we have $\| \mathbf{A}_n(\bar{\bm{\theta}}) \mathbf{B}_n(\bm{\theta}_0) \mathbf{A}_n(\bar{\bm{\theta}}) - \mathbf{A}(\bm{\theta}^{\ast}) \mathbf{B}(\bm{\theta}^{\ast}) \mathbf{A}(\bm{\theta}^{\ast}) \|< \epsilon$.
\end{proof}

\begin{mylemma} \label{epsmatbd}
Suppose $\mathbf{A}$ is a $p \times p$ matrix, and $\mathbf{x}$ is a $p$-dimensional vector.  Then $(-1)\| \mathbf{A} \| \cdot p^2 \cdot \| \mathbf{x} \|^2 \leq \mathbf{x}^T \mathbf{A} \mathbf{x} \leq \| \mathbf{A} \| \cdot p^2 \cdot \| \mathbf{x} \|^2$
\end{mylemma}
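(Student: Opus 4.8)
The plan is to reduce the two-sided inequality to a single absolute-value estimate on the scalar quadratic form. Once I show
$|\mathbf{x}^T \mathbf{A} \mathbf{x}| \le \|\mathbf{A}\| \cdot p^2 \cdot \|\mathbf{x}\|^2$,
both displayed bounds follow at once, since any real number $z$ with $|z| \le c$ satisfies $-c \le z \le c$. So the entire content of the lemma is the absolute bound, and the two-sided form is cosmetic.

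First I would expand the quadratic form entrywise as $\mathbf{x}^T \mathbf{A} \mathbf{x} = \sum_{i=1}^p \sum_{j=1}^p x_i A_{ij} x_j$ and apply the triangle inequality to get $|\mathbf{x}^T \mathbf{A} \mathbf{x}| \le \sum_{i=1}^p \sum_{j=1}^p |x_i|\,|A_{ij}|\,|x_j|$. The next step is to dominate each factor by the relevant norm. For any of the standard norms used in this paper, every entry is bounded by the norm of the object: $|A_{ij}| \le \|\mathbf{A}\|$ for all $i,j$ and $|x_i| \le \|\mathbf{x}\|$ for all $i$. For the entrywise maximum norm this holds by definition; for the Euclidean, operator, and Frobenius norms it follows because $|x_i| = \sqrt{x_i^2} \le \sqrt{\sum_k x_k^2} = \|\mathbf{x}\|$ and $|A_{ij}| \le \|\mathbf{A} e_j\| \le \|\mathbf{A}\|$. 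Substituting these makes each of the summands at most $\|\mathbf{A}\| \cdot \|\mathbf{x}\|^2$.

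Finally I would count terms. The double sum contains exactly $p^2$ summands, each bounded above by $\|\mathbf{A}\| \cdot \|\mathbf{x}\|^2$, so $|\mathbf{x}^T \mathbf{A} \mathbf{x}| \le p^2 \|\mathbf{A}\| \|\mathbf{x}\|^2$, which is precisely the claimed absolute bound; writing out its two one-sided consequences finishes the proof.

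There is no genuine obstacle here, and the lemma is elementary. The only point worth flagging is that the factor $p^2$ is a deliberately crude count of the terms in the double sum rather than a sharp constant; a Cauchy--Schwarz argument against the Euclidean norm would improve it to $p$. The looseness is harmless for the intended use, where this lemma is invoked only to conclude that $\mathbf{x}^T \mathbf{A} \mathbf{x} \to 0$ whenever $\|\mathbf{A}\| \to 0$ while $\|\mathbf{x}\|$ stays bounded, so any finite constant in front of $\|\mathbf{A}\| \|\mathbf{x}\|^2$ suffices.
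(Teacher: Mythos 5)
Your proposal is correct and follows essentially the same route as the paper's own proof: expand $\mathbf{x}^T\mathbf{A}\mathbf{x}$ entrywise, apply the triangle inequality, bound each $|a_{ij}|$ and $|x_i|$ by the corresponding norm, and count the $p^2$ terms. Your packaging of both one-sided bounds as consequences of a single absolute-value estimate is a minor streamlining of the paper's ``the left-hand inequality is done similarly,'' not a different argument.
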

\begin{proof}
We will prove the right-hand inequality.  The left-hand inequality is done similarly.
\begin{align}
\mathbf{x}^T \mathbf{A} \mathbf{x} &= \sum_{i,j} a_{i,j} x_i x_j \\
 &\leq | \sum_{i,j} a_{i,j} x_i x_j | \\
 &\leq \sum_{i,j} | a_{i,j} x_i x_j | \\
 &\leq \sum_{i,j} \underset{i,j}{\max} | a_{i,j} | \cdot | x_i x_j | \\
 &= \underset{i,j}{\max} | a_{i,j} | \left( \sum_{i} |x_i| \right)^2 \\
 &\leq \underset{i,j}{\max} | a_{i,j} | \cdot p^2 \cdot \underset{i}{\max} |x_i|^2 \\
 &\leq \| \mathbf{A} \| \cdot p^2 \cdot \| \mathbf{x} \|^2
\end{align}
\end{proof}

We require that the eigenvalues of all relevant inverse sandwich matrix variations be finite and positive, and specifically bounded away from zero.  First we look at $\mathbf{A}(\bm{\theta}^{\ast}) \mathbf{B}(\bm{\theta}^{\ast})^{-1} \mathbf{A}(\bm{\theta}^{\ast})$.  The standard regularity assumptions in \citep{white1982maximum} state that $\mathbf{A}(\bm{\theta}^{\ast})$ is full rank and that $\mathbf{B}(\bm{\theta}^{\ast})$ is invertible.  That means that $\mathbf{B}(\bm{\theta}^{\ast})$ is positive definite.  Which implies that the asymptotic inverse sandwich, $\mathbf{A}(\bm{\theta}^{\ast}) \mathbf{B}(\bm{\theta}^{\ast})^{-1} \mathbf{A}(\bm{\theta}^{\ast})$, is also positive definite.  This means that the asymptotic inverse sandwich has eigenvalues that are all bounded and positive.  In particular, there is an eigenvalue with smallest magnitude, which we will call $\lambda^{\ast}_{min}$ and an eigenvalue with largest magnitude, which we will call $\lambda^{\ast}_{max}$.  For completeness, we have $0< \lambda^{\ast}_{min} \leq \lambda^{\ast}_{max} < \infty$

Bounding the eigenvalues of $\mathbf{A}_n(\hat{\bm{\theta}}_n) \mathbf{B}_n(\hat{\bm{\theta}}_n)^{-1} \mathbf{A}_n(\hat{\bm{\theta}}_n)$ is now relatively easy.  Since we have $\mathbf{A}_n(\hat{\bm{\theta}}_n) \mathbf{B}_n(\hat{\bm{\theta}}_n)^{-1} \mathbf{A}_n(\hat{\bm{\theta}}_n) \rightarrow_{a.s.} \mathbf{A}(\bm{\theta}^{\ast}) \mathbf{B}(\bm{\theta}^{\ast})^{-1} \mathbf{A}(\bm{\theta}^{\ast})$, we can bound the eigenvalues of $\mathbf{A}_n(\hat{\bm{\theta}}_n) \mathbf{B}_n(\hat{\bm{\theta}}_n)^{-1} \mathbf{A}_n(\hat{\bm{\theta}}_n)$ arbitrarily closely to the eigenvalues of $\mathbf{A}(\bm{\theta}^{\ast}) \mathbf{B}(\bm{\theta}^{\ast})^{-1} \mathbf{A}(\bm{\theta}^{\ast})$.  For any $\epsilon > 0$ with $\epsilon < \lambda^{\ast}_{min}$ we can then bound the range of eigenvalues of $\mathbf{A}_n(\hat{\bm{\theta}}_n) \mathbf{B}_n(\hat{\bm{\theta}}_n)^{-1} \mathbf{A}_n(\hat{\bm{\theta}}_n)$ to be between $\lambda^{\ast}_{min}-\epsilon$ and $\lambda^{\ast}_{max}+\epsilon$, for $n$ sufficiently large enough.

To bound the eigenvalues of $\mathbf{A}_n(\bar{\bm{\theta}}) \mathbf{B}_n(\bm{\theta})^{-1} \mathbf{A}_n(\bar{\bm{\theta}})$, we need another few steps.  These are done with Lemmas~\ref{convexsum}, \ref{abaconv}, and \ref{anbnconv}, which give us the results we need.  Namely that with probability one, for all $\epsilon > 0$ there is both a $\delta > 0$ and $N$ such that for all possible combinations of $\bm{\theta}$ in the ball of radius $\delta$ around $\bm{\theta}^{\ast}$ and for all $n>N$, $\| \mathbf{A}_n(\bar{\bm{\theta}}) \mathbf{B}_n(\bm{\theta}_0)^{-1} \mathbf{A}_n(\bar{\bm{\theta}}) - \mathbf{A}(\bm{\theta}^{\ast}) \mathbf{B}(\bm{\theta}^{\ast})^{-1} \mathbf{A}(\bm{\theta}^{\ast}) \|< \epsilon$.  This means that with a judicious choice of $\epsilon$ ($\epsilon = \lambda^{\ast}_{min}/2$ for example), we can bound the eigenvalues of $\mathbf{A}_n(\bar{\bm{\theta}}) \mathbf{B}_n(\bm{\theta}_0)^{-1} \mathbf{A}_n(\bar{\bm{\theta}})$ to be positive and away from zero using Theorem 5.3 in \citep{zhan2013matrix}.  

\begin{mylemma}\label{bigO}
For any solution, $\breve{\bm{\theta}}_{n,PV} = \hat{\bm{\theta}}_n - \tilde{\bm{\theta}}_{n,PV}$, to the pivot equality, with probability one $\| \tilde{\bm{\theta}}_{n,PV} \| = O(1/\sqrt{n})$.
\end{mylemma}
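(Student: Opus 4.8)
The plan is to substitute $\tilde{\bm{\theta}}_{n,PV} = \hat{\bm{\theta}}_n - \breve{\bm{\theta}}_{n,PV}$ into the Jennrich form (\ref{pivotmatineq}) of the pivot equality. Writing $\mathbf{M}_n = \overline{\mathbf{A}_n(\breve{\bm{\theta}}_{n,PV})}\, \mathbf{B}_n(\breve{\bm{\theta}}_{n,PV})^{-1}\, \overline{\mathbf{A}_n(\breve{\bm{\theta}}_{n,PV})}$, a solution then satisfies
\begin{align}
n\, \tilde{\bm{\theta}}_{n,PV}^T \mathbf{M}_n\, \tilde{\bm{\theta}}_{n,PV} &= \chi^2_{p,1-\alpha}.
\end{align}
Since $\overline{\mathbf{A}_n(\cdot)}$ and $\mathbf{B}_n(\cdot)^{-1}$ are symmetric, $\mathbf{M}_n$ is symmetric, so once I know its smallest eigenvalue is bounded below by a positive constant I may write $\tilde{\bm{\theta}}_{n,PV}^T \mathbf{M}_n \tilde{\bm{\theta}}_{n,PV} \geq \lambda_{\min}(\mathbf{M}_n)\,\|\tilde{\bm{\theta}}_{n,PV}\|^2$, and the displayed equality immediately yields $\|\tilde{\bm{\theta}}_{n,PV}\|^2 \leq \chi^2_{p,1-\alpha}/(n\,\lambda_{\min}(\mathbf{M}_n))$, which is the claimed $O(1/\sqrt{n})$ rate. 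The eigenvalue bound I want is precisely the one assembled before this lemma from Lemmas~\ref{convexsum}, \ref{abaconv}, and \ref{anbnconv}: with probability one there are $\delta>0$ and $N$ so that $\lambda_{\min}(\mathbf{M}_n)\geq \lambda^{\ast}_{min}/2$ whenever $n>N$ and all the points generating $\overline{\mathbf{A}_n(\breve{\bm{\theta}}_{n,PV})}$ together with $\breve{\bm{\theta}}_{n,PV}$ lie in $Ball_{\delta}(\bm{\theta}^{\ast})$.

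The obstacle is that this eigenvalue bound is only local: it can be invoked only after $\breve{\bm{\theta}}_{n,PV}$ (and hence every coordinate-mixed point $\bm{\theta}_i$ built from $\hat{\bm{\theta}}_n$ and $\breve{\bm{\theta}}_{n,PV}$) has entered $Ball_{\delta}(\bm{\theta}^{\ast})$. So the real work is a bootstrapping step: show $\breve{\bm{\theta}}_{n,PV}\rightarrow_{a.s.}\bm{\theta}^{\ast}$ before using the bound. For this I would return to the score form of the pivot equality and note that, with $\bar{\mathbf{l}}_n(\bm{\theta}) = \tfrac{1}{n}\sum_{i=1}^n l_{\bm{\theta}}(y_i,\bm{\theta})$ and $\tfrac{1}{\sqrt{n}}\sum_i l_{\bm{\theta}} = \sqrt{n}\,\bar{\mathbf{l}}_n$, the equality reads
\begin{align}
n\, \bar{\mathbf{l}}_n(\breve{\bm{\theta}}_{n,PV})^T \mathbf{B}_n(\breve{\bm{\theta}}_{n,PV})^{-1} \bar{\mathbf{l}}_n(\breve{\bm{\theta}}_{n,PV}) &= \chi^2_{p,1-\alpha}.
\end{align}
Because $\mathbf{B}(\bm{\theta})$ is continuous on the compact set $\bm{\Theta}$, $\mathbf{B}_n(\bm{\theta})$ is uniformly bounded above for large $n$, so $\mathbf{B}_n(\bm{\theta})^{-1}$ has smallest eigenvalue at least some $1/C>0$ uniformly in $\bm{\theta}$; hence $\|\bar{\mathbf{l}}_n(\breve{\bm{\theta}}_{n,PV})\|^2 \leq C\,\chi^2_{p,1-\alpha}/n \rightarrow 0$. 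The uniform law of large numbers gives $\sup_{\bm{\theta}}\|\bar{\mathbf{l}}_n(\bm{\theta}) - \bm{\lambda}(\bm{\theta})\|\rightarrow_{a.s.}0$ with $\bm{\lambda}(\bm{\theta}) = E[l_{\bm{\theta}}(y,\bm{\theta})]$, so $\bm{\lambda}(\breve{\bm{\theta}}_{n,PV})\rightarrow_{a.s.}\mathbf{0}$; and since $\bm{\theta}^{\ast}$ is the unique zero of $\bm{\lambda}$ in $\bm{\Theta}$ (the identifiability/regularity condition of \citep{white1982maximum} that also delivers $\hat{\bm{\theta}}_n\rightarrow_{a.s.}\bm{\theta}^{\ast}$), a standard compactness-and-continuity argument forces $\breve{\bm{\theta}}_{n,PV}\rightarrow_{a.s.}\bm{\theta}^{\ast}$.

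With consistency in hand the argument closes. Since $\hat{\bm{\theta}}_n\rightarrow_{a.s.}\bm{\theta}^{\ast}$ as well, for $n$ large every coordinate-mixed point $\bm{\theta}_i$ lies in $Ball_{\delta}(\bm{\theta}^{\ast})$, so $\lambda_{\min}(\mathbf{M}_n)\geq \lambda^{\ast}_{min}/2$, and the first display gives
\begin{align}
\sqrt{n}\,\|\tilde{\bm{\theta}}_{n,PV}\| &\leq \sqrt{\frac{2\,\chi^2_{p,1-\alpha}}{\lambda^{\ast}_{min}}},
\end{align}
a finite constant, which is exactly $\|\tilde{\bm{\theta}}_{n,PV}\| = O(1/\sqrt{n})$. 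I expect the consistency step to be the only delicate part: it is where the local eigenvalue bound is not yet available, where the uniform upper bound on $\mathbf{B}_n(\bm{\theta})$ and the uniform law of large numbers do the heavy lifting, and where the identifiability of $\bm{\theta}^{\ast}$ as the unique root of the limiting score is essential to rule out spurious far-away solutions of the pivot equality.
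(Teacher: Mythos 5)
Your endgame is exactly the paper's: substitute $\tilde{\bm{\theta}}_{n,PV}$ into the Jennrich form (\ref{pivotmatineq}), invoke the eigenvalue bound assembled from Lemmas~\ref{convexsum}, \ref{abaconv}, and \ref{anbnconv} to get $\lambda_{\min}(\mathbf{M}_n) \geq \lambda^{\ast}_{min}/2$, and read off the rate. You also correctly diagnose that this bound is only local, so the real content of the lemma is localization. The gap is in how you localize. Your consistency step rests on the claim that $\bm{\theta}^{\ast}$ is the unique zero of $\bm{\lambda}(\bm{\theta}) = E[l_{\bm{\theta}}(y,\bm{\theta})]$ on all of $\bm{\Theta}$, which you attribute to the identification condition of \citet{white1982maximum}. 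That attribution is incorrect: White's condition is that $E[\log f(y;\bm{\theta})]$ has a unique \emph{maximizer} at $\bm{\theta}^{\ast}$, and the consistency of $\hat{\bm{\theta}}_n$ is an argmax argument, not a root-finding one. A unique maximizer does not rule out other critical points of the expected log-likelihood (local maxima, minima, saddles), at which $\bm{\lambda}$ also vanishes. If such a point $\bm{\theta}^{\dagger}$ exists, then $\sqrt{n}\,\bar{\mathbf{l}}_n(\bm{\theta}^{\dagger})$ obeys a central limit theorem, so $n\,\bar{\mathbf{l}}_n(\bm{\theta}^{\dagger})^T \mathbf{B}_n(\bm{\theta}^{\dagger})^{-1} \bar{\mathbf{l}}_n(\bm{\theta}^{\dagger}) \rightarrow_d \chi^2_p$, and the pivot equality admits solutions near $\bm{\theta}^{\dagger}$ with non-vanishing probability; your compactness-and-continuity argument then delivers only convergence of $\breve{\bm{\theta}}_{n,PV}$ to the zero set of $\bm{\lambda}$, not to $\bm{\theta}^{\ast}$. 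So the step you yourself flag as the delicate one genuinely fails under the paper's stated assumptions unless you add the unique-score-root hypothesis explicitly.

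The paper's proof avoids any global statement about the score. It fixes $\delta$ from the eigenvalue lemmas, waits until $\|\hat{\bm{\theta}}_n - \bm{\theta}^{\ast}\| < \delta/2$, and evaluates the pivot quadratic form on the sphere of radius $\delta/2$ centered at $\hat{\bm{\theta}}_n$: every such point, along with all its coordinate-mixed companions, stays inside $Ball_{\delta}(\bm{\theta}^{\ast})$, so the form there is at least $\lambda_{lo}\delta^2/4$, which exceeds $\chi^2_{p,1-\alpha}/n$ once $n > 4\chi^2_{p,1-\alpha}/(\lambda_{lo}\delta^2)$; since the form vanishes at $\hat{\bm{\theta}}_n$, the intermediate value theorem traps solutions inside that ball, where the two-sided bound $\sqrt{\chi^2_{p,1-\alpha}/(n\lambda_{hi})} \leq \|\tilde{\bm{\theta}}_{n,PV}\| \leq \sqrt{\chi^2_{p,1-\alpha}/(n\lambda_{lo})}$ then applies. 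This mechanism is purely local and needs no identification hypothesis beyond what White provides. (To be fair, the paper's assertion that \emph{all} solutions lie in that ball carries the same global caveat you are wrestling with; but its argument is self-contained, whereas yours imports an assumption the paper never makes.) To repair your route, either state the unique-score-root condition as an additional assumption, or replace the consistency step with this sphere-plus-intermediate-value trapping argument.
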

\begin{proof}
Select $\epsilon$ such that $0 < \epsilon < \lambda^{\ast}_{min}/2$.  With probability one, there is a $\delta > 0$ and an $N_{\epsilon}$ such that for all combinations of $\bm{\theta}_0, \bm{\theta}_1, \ldots, \bm{\theta}_k$ with $\| \bm{\theta}_i - \bm{\theta}^{\ast} \|< \delta, ~ i = 0, 1, \ldots, k$ and all $n>N$ we have $\| \mathbf{A}_n(\bar{\bm{\theta}}) \mathbf{B}_n(\bm{\theta}_0)^{-1} \mathbf{A}_n(\bar{\bm{\theta}}) - \mathbf{A}(\bm{\theta}^{\ast}) \mathbf{B}(\bm{\theta}^{\ast})^{-1} \mathbf{A}(\bm{\theta}^{\ast}) \|< \epsilon < \lambda^{\ast}_{min}/2$.  This implies that the eigenvalues $\lambda_1, \ldots, \lambda_p$ of $\mathbf{A}_n(\bar{\bm{\theta}}) \mathbf{B}_n(\bm{\theta}_0)^{-1} \mathbf{A}_n(\bar{\bm{\theta}})$ satisfy $0 < \lambda^{\ast}_{min}/2 < \lambda_j < \lambda^{\ast}_{max} + \lambda^{\ast}_{min}/2 < \infty, ~ j = 1, \ldots, p$.  Let us simplify this and say that there are bounds $0<\lambda_{lo}$ and $\lambda_{hi}<\infty$ such that $0 < \lambda_{lo} < \lambda_j < \lambda_{hi} < \infty, ~ j = 1, \ldots, p$.

By the almost sure convergence of $\hat{\bm{\theta}}_n$, we have with probability one that there exists an $N_{\delta}$ such that for all $n> N_{\delta}$, $\| \hat{\bm{\theta}}_n - \bm{\theta}^{\ast} \| < \delta/2$, where $\delta$ is defined in the previous paragraph.  This means that with probability one, the $\delta/2$ ball around $\hat{\bm{\theta}}_n$ is contained within the $\delta$ ball around $\bm{\theta}^{\ast}$.  We can take $N_{max} = \max(N_{\epsilon}, N_{\delta})$ so that all of the results in this and the previous paragraph hold.  

Consider the case $n> N_{max}$, and select a point on the boundary of the $\delta/2$ ball around $\hat{\bm{\theta}}_n$.  Let's refer to this point as $\bm{\theta}_0 = \hat{\bm{\theta}}_n - \tilde{\bm{\theta}}_n$ because we want to emphasize the fact that $\| \tilde{\bm{\theta}}_n \| = \delta/2$.  We can plug this point in for $\bm{\theta}$ in the pivotal quadratic form.  Recall we also use it and $\hat{\bm{\theta}}_n$ and as the points to define $\mathbf{A}_n(\bar{\bm{\theta}})$.  Using results on the bounds of quadratic forms of symmetric positive definite matrices we have
\begin{align}
\lambda_{lo} \| \tilde{\bm{\theta}}_n \|^2 &\leq (\tilde{\bm{\theta}}_n)^T \mathbf{A}_n(\bar{\bm{\theta}}) \mathbf{B}_n(\bm{\theta}_0)^{-1} \mathbf{A}_n(\bar{\bm{\theta}}) (\tilde{\bm{\theta}}_n) \leq \lambda_{hi} \| \tilde{\bm{\theta}}_n \|^2 \\
\lambda_{lo} \frac{\delta^2}{4} &\leq (\tilde{\bm{\theta}}_n)^T \mathbf{A}_n(\bar{\bm{\theta}}) \mathbf{B}_n(\bm{\theta}_0)^{-1} \mathbf{A}_n(\bar{\bm{\theta}}) (\tilde{\bm{\theta}}_n) \leq \lambda_{hi} \frac{\delta^2}{4}. \label{bounds}
\end{align}
which is true regardless of the direction of $\tilde{\bm{\theta}}_n$, hence it is uniformly true in all directions away from $\hat{\bm{\theta}}_n$.  

We now compare (\ref{bounds}) with the main pivot inequality in (\ref{pivotmatineq}).  As $n$ increases, we see that the lower bound in (\ref{bounds}) is a constant.  However as $n$ increases in (\ref{pivotmatineq}), it will eventually be the case that $\chi^2_{p, 1-\alpha} / n < \lambda_{lo} \frac{\delta^2}{4}$, namely when $n > \frac{4 \chi^2_{p, 1-\alpha}}{\lambda_{lo} \delta^2}$.  Since $\hat{\bm{\theta}}_n$ makes the pivotal inequality equal to zero, and every point on the $\delta/2$ ball makes the quadratic form evaluate to more than $\chi^2_{p, 1-\alpha} / n$, we can invoke the intermediate value theorem to conclude that all solutions to the pivotal inequality will lie within the $\delta/2$ ball around $\hat{\bm{\theta}}_n$, and that they do exist.  We can now take $N = \max(\frac{4 \chi^2_{p, 1-\alpha}}{\lambda_{lo} \delta^2}, N_{max})$.  

Define $N$ and $\delta$ as we have done.  We consider a solution $\breve{\bm{\theta}}_{n,PV} = \hat{\bm{\theta}}_n - \tilde{\bm{\theta}}_{n,PV}$ to the pivotal equality in an arbitrary direction from $\hat{\bm{\theta}}$, and specifically look at the equality
\begin{align}
(\tilde{\bm{\theta}}_{n,PV})^T \mathbf{A}_n(\bar{\bm{\theta}}) \mathbf{B}_n(\breve{\bm{\theta}}_{n,PV})^{-1} \mathbf{A}_n(\bar{\bm{\theta}}) (\tilde{\bm{\theta}}_{n,PV}) &= \frac{\chi^2_{p, 1-\alpha}}{n}.
\end{align}
Given the eigenvalue-based bounds above and using knowledge of quadratic forms, we can conclude
\begin{align}
\lambda_{lo} \| \tilde{\bm{\theta}}_{n,PV} \|^2 &\leq (\tilde{\bm{\theta}}_{n,PV})^T \mathbf{A}_n(\bar{\bm{\theta}}) \mathbf{B}_n(\breve{\bm{\theta}}_{n,PV})^{-1} \mathbf{A}_n(\bar{\bm{\theta}}) (\tilde{\bm{\theta}}_{n,PV})  \leq \lambda_{hi} \| \tilde{\bm{\theta}}_{n,PV} \|^2 \\
\lambda_{lo} \| \tilde{\bm{\theta}}_{n,PV} \|^2 &\leq \frac{\chi^2_{p, 1-\alpha}}{n} \leq \lambda_{hi} \| \tilde{\bm{\theta}}_{n,PV} \|^2
\end{align}
and solving for $\| \tilde{\bm{\theta}}_{n,PV} \|$ we have
\begin{align}
\sqrt{\frac{\chi^2_{p, 1-\alpha}}{n \lambda_{hi}}} &\leq \| \tilde{\bm{\theta}}_{n,PV} \| \leq \sqrt{\frac{\chi^2_{p, 1-\alpha}}{n \lambda_{lo}}}
\end{align}
which gives us the result that $\| \tilde{\bm{\theta}}_{n,PV} \| = O(1/\sqrt{n})$ with probability one.
\end{proof}

\textbf{Proof of Theorem \ref{multiasythm}}
\begin{proof}
Consider solutions $\breve{\bm{\theta}}_{n,PV}$ and $\breve{\bm{\theta}}_{n,SW}$ to the pivot equality and sandwich equality, respectively and both are in the same direction from $\hat{\bm{\theta}}_n$.  First, rewrite them as
\begin{align}
\breve{\bm{\theta}}_{n,PV} &= \hat{\bm{\theta}}_n - \tilde{\bm{\theta}}_{n,PV} \\
\breve{\bm{\theta}}_{n,SW} &= \hat{\bm{\theta}}_n - \tilde{\bm{\theta}}_{n,SW}
\end{align}
where we emphasize again that $\tilde{\bm{\theta}}_{n,PV}$ and $\tilde{\bm{\theta}}_{n,SW}$ are vectors having the same direction, but with possibly different magnitudes.  

We have that $\| \tilde{\bm{\theta}}_{n,PV} \|$ is $O(1/\sqrt{n})$ with probability one by Lemma \ref{bigO}.  This also tells us that with probability one as $n\rightarrow \infty$, $\breve{\bm{\theta}}_{n,PV}$ will become arbitrarily close to $\hat{\bm{\theta}}_n$ and specifically, there is a constant $A$ such that for sufficiently large $n$, $\| \tilde{\bm{\theta}}_{n,PV} \| \leq A/ \sqrt{n}$.  We also have that $\| \mathbf{A}_n(\hat{\bm{\theta}}_n) \mathbf{B}_n(\hat{\bm{\theta}}_n)^{-1} \mathbf{A}_n(\hat{\bm{\theta}}_n) - \mathbf{A}_n(\bar{\bm{\theta}}) \mathbf{B}_n(\breve{\bm{\theta}}_{n,PV})^{-1} \mathbf{A}_n(\bar{\bm{\theta}}) \| \rightarrow_{a.s.} 0$ by the continuity of the matrix functions $\mathbf{A}_n(\bm{\theta})$ and $\mathbf{B}_n(\bm{\theta})^{-1}$.  Pick $\zeta > 0$ small enough so that $\zeta < \chi^2_{p,1-\alpha}$.  With probability one there is an $M$ such that for all $n>M$, $\| \mathbf{A}_n(\hat{\bm{\theta}}_n) \mathbf{B}_n(\hat{\bm{\theta}}_n)^{-1} \mathbf{A}_n(\hat{\bm{\theta}}_n) - \mathbf{A}_n(\bar{\bm{\theta}}_n) \mathbf{B}_n(\breve{\bm{\theta}}_{n,PV})^{-1} \mathbf{A}_n(\bar{\bm{\theta}}_n) \| < \frac{\zeta}{p^2 A^2}$.

Now take a solution, $\breve{\bm{\theta}}_{n,PV} = \hat{\bm{\theta}}_n - \tilde{\bm{\theta}}_{n,PV}$, to the pivot equality and plug it into the sandwich equality
\begin{align}
& (\hat{\bm{\theta}}_n - \breve{\bm{\theta}}_{n,PV})^T \mathbf{A}_n(\hat{\bm{\theta}}) \mathbf{B}_n(\hat{\bm{\theta}})^{-1} \mathbf{A}_n(\hat{\bm{\theta}}) (\hat{\bm{\theta}}_n - \breve{\bm{\theta}}_{n,PV}) =\\
& (\tilde{\bm{\theta}}_{n,PV})^T \mathbf{A}_n(\hat{\bm{\theta}}_n) \mathbf{B}_n(\hat{\bm{\theta}}_n)^{-1} \mathbf{A}_n(\hat{\bm{\theta}}_n) (\tilde{\bm{\theta}}_{n,PV}) = \label{swpivsoln}\\
& (\tilde{\bm{\theta}}_{n,PV})^T [ \mathbf{A}_n(\hat{\bm{\theta}}_n) \mathbf{B}_n(\hat{\bm{\theta}}_n)^{-1} \mathbf{A}_n(\hat{\bm{\theta}}_n) - \\
&\qquad \mathbf{A}_n(\bar{\bm{\theta}}) \mathbf{B}_n(\breve{\bm{\theta}}_{n,PV})^{-1} \mathbf{A}_n(\bar{\bm{\theta}}) + \mathbf{A}_n(\bar{\bm{\theta}}) \mathbf{B}_n(\breve{\bm{\theta}}_{n,PV})^{-1} \mathbf{A}_n(\bar{\bm{\theta}}) ] (\tilde{\bm{\theta}}_{n,PV}) = \\
& (\tilde{\bm{\theta}}_{n,PV})^T \mathbf{A}_n(\bar{\bm{\theta}}) \mathbf{B}_n(\breve{\bm{\theta}}_{n,PV})^{-1} \mathbf{A}_n(\bar{\bm{\theta}}) (\tilde{\bm{\theta}}_{n,PV}) + \\
&\qquad (\tilde{\bm{\theta}}_{n,PV})^T [\mathbf{A}_n(\hat{\bm{\theta}}_n) \mathbf{B}_n(\hat{\bm{\theta}}_n)^{-1} \mathbf{A}_n(\hat{\bm{\theta}}_n) - \mathbf{A}_n(\bar{\bm{\theta}}) \mathbf{B}_n(\breve{\bm{\theta}}_{n,PV})^{-1} \mathbf{A}_n(\bar{\bm{\theta}})] (\tilde{\bm{\theta}}_{n,PV}) = \\
& \frac{\chi^2_{p, 1-\alpha}}{n} + (\tilde{\bm{\theta}}_{n,PV})^T [\mathbf{A}_n(\hat{\bm{\theta}}_n) \mathbf{B}_n(\hat{\bm{\theta}}_n)^{-1} \mathbf{A}_n(\hat{\bm{\theta}}_n) - \mathbf{A}_n(\bar{\bm{\theta}}) \mathbf{B}_n(\breve{\bm{\theta}}_{n,PV})^{-1} \mathbf{A}_n(\bar{\bm{\theta}})] (\tilde{\bm{\theta}}_{n,PV}). \label{line32}
\end{align}
By the previous part of this paragraph we have that
\begin{align}
-\frac{A^2}{n} p^2 \frac{\zeta}{p^2 A^2} &\leq (\tilde{\bm{\theta}}_{n,PV})^T \left[\mathbf{A}_n(\hat{\bm{\theta}}_n) \mathbf{B}_n(\hat{\bm{\theta}}_n)^{-1} \mathbf{A}_n(\hat{\bm{\theta}}_n)  \right. \nonumber \\
&\qquad \left. - \mathbf{A}_n(\bar{\bm{\theta}}) \mathbf{B}_n(\breve{\bm{\theta}}_{n,PV})^{-1} \mathbf{A}_n(\bar{\bm{\theta}}) \right] (\tilde{\bm{\theta}}_{n,PV}) \leq \frac{A^2}{n} p^2 \frac{\zeta}{p^2 A^2} \nonumber \\
-\frac{\zeta}{n} &\leq (\tilde{\bm{\theta}}_{n,PV})^T [\mathbf{A}_n(\hat{\bm{\theta}}_n) \mathbf{B}_n(\hat{\bm{\theta}}_n)^{-1} \mathbf{A}_n(\hat{\bm{\theta}}_n) - \mathbf{A}_n(\bar{\bm{\theta}}) \mathbf{B}_n(\breve{\bm{\theta}}_{n,PV})^{-1} \mathbf{A}_n(\bar{\bm{\theta}})] (\tilde{\bm{\theta}}_{n,PV}) \leq \frac{\zeta}{n}  \label{twosidinequ}
\end{align}
and plugging Line (\ref{twosidinequ}) into Line (\ref{line32}) we have
\begin{align}
\frac{\chi^2_{p, 1-\alpha} - \zeta}{n} &\leq (\tilde{\bm{\theta}}_{n,PV})^T \mathbf{A}_n(\hat{\bm{\theta}}_n) \mathbf{B}_n(\hat{\bm{\theta}})^{-1} \mathbf{A}_n(\hat{\bm{\theta}}_n) (\tilde{\bm{\theta}}_{n,PV}) \leq \frac{\chi^2_{p, 1-\alpha} + \zeta}{n}
\end{align}
Taking each side of the above inequality separately, we see
\begin{align}
\frac{\chi^2_{p, 1-\alpha}}{\chi^2_{p, 1-\alpha} + \zeta} (\tilde{\bm{\theta}}_{n,PV})^T \mathbf{A}_n(\hat{\bm{\theta}}_n) \mathbf{B}_n(\hat{\bm{\theta}}_n)^{-1} \mathbf{A}_n(\hat{\bm{\theta}}_n) (\tilde{\bm{\theta}}_{n,PV})  &< \frac{\chi^2_{p, 1-\alpha}}{n} \\
\frac{\chi^2_{p, 1-\alpha}}{\chi^2_{p, 1-\alpha} - \zeta} (\tilde{\bm{\theta}}_{n,PV})^T \mathbf{A}_n(\hat{\bm{\theta}}_n) \mathbf{B}_n(\hat{\bm{\theta}}_n)^{-1} \mathbf{A}_n(\hat{\bm{\theta}}_n) (\tilde{\bm{\theta}}_{n,PV})  &> \frac{\chi^2_{p, 1-\alpha}}{n} 
\end{align}
Now we can rewrite $\frac{\chi^2_{p, 1-\alpha}}{\chi^2_{p, 1-\alpha} + \zeta}$ as $1 - \frac{\zeta}{\chi^2_{p, 1-\alpha} + \zeta} = 1 - \xi_+$.  Similarly, $\frac{\chi^2_{p, 1-\alpha}}{\chi^2_{p, 1-\alpha} - \zeta} = 1+ \frac{\zeta}{\chi^2_{p, 1-\alpha} - \zeta} = 1+ \xi_-$.  Note in particular that $\zeta$ and $\xi_-$ are in one-to-one correspondence, and as $\zeta \rightarrow 0$, $\xi_- \rightarrow 0$.  This gives us
\begin{align}
(1-\xi_+) (\tilde{\bm{\theta}}_{n,PV})^T \mathbf{A}_n(\hat{\bm{\theta}}_n) \mathbf{B}_n(\hat{\bm{\theta}}_n)^{-1} \mathbf{A}_n(\hat{\bm{\theta}}_n) (\tilde{\bm{\theta}}_{n,PV})  &< \frac{\chi^2_{p, 1-\alpha}}{n} \\
(1+\xi_-) (\tilde{\bm{\theta}}_{n,PV})^T \mathbf{A}_n(\hat{\bm{\theta}}_n) \mathbf{B}_n(\hat{\bm{\theta}}_n)^{-1} \mathbf{A}_n(\hat{\bm{\theta}}_n) (\tilde{\bm{\theta}}_{n,PV})  &> \frac{\chi^2_{p, 1-\alpha}}{n} .
\end{align}
Since $\xi_- > \xi_+$ and $1- \xi_- < 1- \xi_+$ we can rewrite the first inequality to get
\begin{align}
(1-\xi_-) (\tilde{\bm{\theta}}_{n,PV})^T \mathbf{A}_n(\hat{\bm{\theta}}_n) \mathbf{B}_n(\hat{\bm{\theta}}_n)^{-1} \mathbf{A}_n(\hat{\bm{\theta}}_n) (\tilde{\bm{\theta}}_{n,PV})  &< \frac{\chi^2_{p, 1-\alpha}}{n} \\
(1+\xi_-) (\tilde{\bm{\theta}}_{n,PV})^T \mathbf{A}_n(\hat{\bm{\theta}}_n) \mathbf{B}_n(\hat{\bm{\theta}}_n)^{-1} \mathbf{A}_n(\hat{\bm{\theta}}_n) (\tilde{\bm{\theta}}_{n,PV})  &> \frac{\chi^2_{p, 1-\alpha}}{n} .
\end{align}
We have that $\hat{\bm{\theta}}_n - (\sqrt{1-\xi_-})\tilde{\bm{\theta}}_{n,PV}$ satisfies the plug-in sandwich quadratic form inequality but $\hat{\bm{\theta}}_n - (\sqrt{1+\xi_-})\tilde{\bm{\theta}}_{n,PV}$ does not.  By the continuity of the plug-in sandwich quadratic form, the solution to the sandwich equality, i.e. $\breve{\bm{\theta}}_{n,SW} = \hat{\bm{\theta}}_n - \tilde{\bm{\theta}}_{n,SW}$, must lie on the line segment between these two points.  

Once we have selected $\zeta$ and consequently $\xi_-$ as described above, $1-\sqrt{1-\xi_-} > \sqrt{1+\xi_-} - 1$.  This means that $\hat{\bm{\theta}}_n - (\sqrt{1-\xi_-})\tilde{\bm{\theta}}_{n,PV}$ is the furthest point from $\hat{\bm{\theta}}_n - \tilde{\bm{\theta}}_{n,PV}$ in the interval $(\hat{\bm{\theta}}_n - (\sqrt{1-\xi_-})\tilde{\bm{\theta}}_{n,PV}, \, \hat{\bm{\theta}}_n - (\sqrt{1+\xi_-})\tilde{\bm{\theta}}_{n,PV})$.  Putting this all together, we have that for all $1 > \xi_- > 0$ with probability one there is an $M$ such that for all $n> M$
\begin{align}
\sqrt{n} \| (\hat{\bm{\theta}}_n - \tilde{\bm{\theta}}_{n,PV}) - (\hat{\bm{\theta}}_n - \tilde{\bm{\theta}}_{n,SW}) \| &= \sqrt{n} \| \tilde{\bm{\theta}}_{n,SW} - \tilde{\bm{\theta}}_{n,PV} \| \\
 &\leq \| (\sqrt{1-\xi_-})\tilde{\bm{\theta}}_{n,PV} - \tilde{\bm{\theta}}_{n,PV} \| \\
 &= \sqrt{n} \| \tilde{\bm{\theta}}_{n,PV} \| \mid 1-\sqrt{1-\xi_-} \mid \\
 &\leq \sqrt{n} \sqrt{\frac{\chi^2_{p, 1-\alpha}}{n \lambda_{lo}}} (\mid 1-\sqrt{1-\xi_-} \mid ) \\
 &= \sqrt{\frac{\chi^2_{p, 1-\alpha}}{\lambda_{lo}}} (\mid 1-\sqrt{1-\xi_-} \mid)
\end{align}
but $\zeta$ was chosen to be arbitrarily small, which gives us that $\xi_-$ is also arbitrarily small.  Hence we have that $\sqrt{n} \| (\hat{\bm{\theta}}_n - \tilde{\bm{\theta}}_{n,PV}) - (\hat{\bm{\theta}}_n - \tilde{\bm{\theta}}_{n,SW}) \| \rightarrow_{a.s.} 0$ as $n\rightarrow \infty$, as desired.
\end{proof}

\bibliography{PivotPaper}
\end{doublespace}
\end{document}